\documentclass[twocolumn]{IEEEtran}

\usepackage{cite}
\usepackage{amsfonts,amsmath,amssymb,amsthm}
\usepackage{graphicx,color,epsfig,rotating}
\usepackage{latexsym}
\usepackage{subfigure}
\usepackage{cite}
\usepackage{epic,eepic}
\usepackage{algorithm}
\usepackage{algpseudocode}
\usepackage{float}
\usepackage{multirow}
\def\BibTeX{{\rm B\kern-.05em{\sc i\kern-.025em b}\kern-.08em    T\kern-.1667em\lower.7ex\hbox{E}\kern-.125emX}}
\usepackage{bbm}
\usepackage{bm}

\newtheorem{theorem}{Theorem}
\newtheorem{lemma}{Lemma}

\newtheorem{corollary}{Corollary}

\definecolor{caribbeangreen}{rgb}{0.0, 0.8, 0.6}

\usepackage[hidelinks]{hyperref}
\usepackage{orcidlink}

\newcommand{\EE}{\mathbb E} 
\newcommand{\PP}{\mathbb P} 

\newcommand{\sv}{{\mathbf{s}}}

\newcommand{\vv}{{\mathbf{v}}}

\newcommand{\cC}{{\mathcal C}}
\newcommand{\cD}{{\mathcal D}}

\newcommand{\cG}{{\mathcal G}}
\newcommand{\cH}{{\mathcal H}}

\newcommand{\cM}{{\mathcal M}}
\newcommand{\cN}{{\mathcal N}}

\newcommand{\cU}{{\mathcal U}}

\newcommand{\tP}{{\texttt P}}

\renewcommand{\d}{\operatorname{d\!}{}} 

\newcommand{\be}{\begin{equation}}
\newcommand{\ee}{\end{equation}}
\newcommand{\bea}{\begin{eqnarray}}
\newcommand{\eea}{\end{eqnarray}}
\newcommand{\bitem}{\begin{itemize}}
\newcommand{\eitem}{\end{itemize}}

\newcommand{\bdp}{\begin{displaymath}}
\newcommand{\edp}{\end{displaymath}}

\allowdisplaybreaks

\begin{document}

\title{Simultaneous Multi-Modal Covert Communications: Analysis and Optimization}

\author{ \IEEEauthorblockN{Justin H. Kong\raisebox{0.5ex}{\orcidlink{0000-0003-2856-7060}},~\emph{Senior Member, IEEE}, Terrence J. Moore\raisebox{0.5ex}{\orcidlink{0000-0003-3279-2965}},~\emph{Member, IEEE}, and \\  Fikadu T. Dagefu\raisebox{0.5ex}{\orcidlink{0000-0002-7532-5278}},~\emph{Senior Member, IEEE}  } \\

\thanks{The authors are with the U.S. Army Combat Capabilities Development Command (DEVCOM) Army Research Laboratory, Adelphi, MD 20783, USA (e-mail: justin.h.kong2.civ@army.mil; terrence.j.moore.civ@army.mil; fikadu.t.dagefu.civ@army.mil).
} } \maketitle



\begin{abstract}

This paper investigates the problem of covert communications in a heterogeneous wireless network where multiple communication modalities are used simultaneously. In this setup, a legitimate transmitter sends confidential data to its receiver by selecting multiple modalities with the goal of maximizing communication covertness against a passive adversary (Willie) while satisfying a transmission rate requirement. We analyze two distinct scenarios for a given observation time by Willie. The two scenarios are: (i) Willie knows the modalities selected by the friendly transmitter, and (ii) Willie is unaware of the selected modalities. We first derive the optimal detector for Willie that minimizes the detection error probability (DEP) in both cases. For the first scenario, we derive an exact expression for the DEP and provide a computationally efficient approximation. For the second scenario, we introduce the DEP expressions in the low-signal-to-noise ratio (SNR) regime at Willie. Building on this analysis, we propose a novel low-complexity modality set selection technique designed to maximize the DEP subject to a rate constraint. Numerical simulations validate the derived analytical expressions and demonstrate that the proposed modality set selection technique achieves near-optimal performance, outperforming benchmark schemes.

\end{abstract}

\begin{IEEEkeywords}
	
Covert communications, simultaneous multi-modal transmission, heterogeneous network.

\end{IEEEkeywords}

\IEEEpeerreviewmaketitle

\section{Introduction}

\IEEEPARstart{C}{overt} communication techniques that mask the very existence of transmissions from potential adversaries have been extensively investigated for security-sensitive networks where merely safeguarding the content of transmitted data is insufficient~\cite{Yan:19}. 
Moreover, heterogeneous networks leveraging various complementary communication modalities have been studied to enhance network security efficiently~\cite{Wu:18b}.
In this context, we explore covert communications in heterogeneous networks where multiple modalities are concurrently utilized, and analyze the impact of multi-modal selection on covertness performance.

\subsection{Related Work and Motivation}

Due to the open nature of the wireless channel, ensuring communication security against adversaries is a paramount concern in wireless communication networks~\cite{Liu:17,Wu:18b,Yan:19}. 
Conventional wiretap channel-based physical layer security approaches prioritize preventing the eavesdropper from deciphering the message content~\cite{Wyner:75}. 
However, in security-critical applications where the detection of transmission could trigger countermeasures, protecting only the data is often insufficient. 
This necessitates covert communication to hide the very existence of transmissions~\cite{Hero:03}, ensuring stealthy communication.

Foundational information-theoretic studies in~\cite{Bash:13, Che:13, Wang:16c, Arumugam:16} established the theoretical bounds of covert communications for an additive white Gaussian noise (AWGN) channel, a binary symmetric channel, a discrete memoryless channel, and a multiple access channel, respectively. 
Building on these foundations, recent works developed practical strategies to enable covert communications in intelligent reflecting surface (IRS)-assisted networks~\cite{Kong:21, Lv:22, Kong:24}, unmanned aerial vehicle (UAV)-enabled networks~\cite{Rao:22,Hou:23,Deng:24}, satellite communication (SATCOM) systems~\cite{Feng:24, Jia:25, Yu:25}, and integrated sensing and communications (ISAC) frameworks~\cite{Ma:23,Wang:24}. 

Recently, covert communications have been extended to heterogeneous networks that opportunistically select a modality from communication modalities having complementary physical layer characteristics —ranging from low-very high frequency (low-VHF) to microwave and millimeter-wave (mmWave) bands—to enhance communication covertness.
The authors in~\cite{Kong:22WCNC} analyzed the performance gain in covert communications achieved by dynamically switching between low-VHF and microwave modalities based on their distinct propagation and detection characteristics. 
Similarly, \cite{Zhang:24} explored mode selection to optimize covert throughput in hybrid microwave/mmWave UAV-enabled networks. 
Expanding into multi-hop routing scenarios, subsequent investigations addressed the joint covert routing and modality selection problem by developing a centralized optimization framework~\cite{Kong:24TIFS}, a decentralized reinforcement learning-based algorithm ~\cite{Kong:24CL}, and a joint resource allocation scheme for multi-flow networks~\cite{Kong:25}.

However, limiting transmission to a single modality may underutilize available spectral resources. 
Consequently, the improvement in communication covertness achieved by using multiple modalities simultaneously has been explored in~\cite{Aggarwal:24, Haque:25}. 
Considering two communication modalities, \cite{Aggarwal:24} analyzed an approximation of the detection error probability (DEP) at an adversary (Willie) employing an iterative detection threshold optimization.  
Furthermore, \cite{Haque:25} expanded this simultaneous two-modality transmission paradigm to the network layer, proposing a decentralized energy-efficient routing protocol to enable covert multi-hop communications.
In the parallel domain of multi-carrier systems, where a single frequency band is divided into multiple sub-carriers, \cite{Lin:23} examined delay-constrained covert communications using a lower bound of the DEP.

While the aforementioned works investigate the impact of resource optimization, covertness is also fundamentally enhanced when Willie suffers from uncertainty regarding the wireless environment. 
It is well established that when Willie lacks perfect knowledge of the environment, his detection capability is reduced, thereby improving communication covertness. 
Specifically, \cite{He:17} demonstrated that uncertainty regarding the background noise power at Willie prevents accurate power detection, enabling positive covert rates. 
Similarly, it was shown in~\cite{Shahzad:18} that the uncertainty induced by artificial noise from a full-duplex receiver further impedes Willie's ability to distinguish signal from noise. 
In addition, \cite{Hu:19} and~\cite{Shahzad:21} studied the impact of channel uncertainty, revealing that Willie's lack of perfect channel state information (CSI) significantly expands the covert transmission region. 
Furthermore, \cite{Wu:24} explored the ambiguity regarding the transmit prior probabilities, illustrating that Willie's uncertainty about Alice's transmission statistics creates additional opportunities for stealthy communication.

To fully exploit these environmental uncertainties and the distinct characteristics of heterogeneous networks, strategic modality set selection is essential for optimal exploitation of the unique complementary features of multiple communication modalities. 
In the context of single-modality systems with multiple subcarriers, optimization techniques for subcarrier selection were investigated to maximize covert rates~\cite{Che:22,Che:24,Lin:23}. 
The works in~\cite{Che:22} and~\cite{Che:24} analyzed single subcarrier selection scenarios where Willie is unaware of the selected frequency, relying on an approximation of the DEP.
In contrast, for scenarios where multiple subcarriers are selected simultaneously and are known to Willie, a resource allocation scheme based on a lower bound of the DEP was developed in~\cite{Lin:23}.

Despite these advances, existing research on simultaneous multi-modal (or multi-carrier) transmission relies heavily on approximations or a lower bound of the DEP~\cite{Aggarwal:24, Haque:25, Lin:23}, rather than an exact analytical derivation based on the optimal detector at Willie. 
Even in scenarios where Willie is fully aware of the selected modalities, an exact analysis of the DEP is not available in the literature.
More critically, the problem of modality uncertainty—where the transmitter simultaneously utilizes a subset of modalities without Willie knowing the specific set selected—has not been investigated. 
Consequently, the potential covertness gains achieved by adaptively selecting a set of modalities have not been investigated.

\subsection{Contributions and Organization}

In this paper, we investigate covert communications in heterogeneous networks where a legitimate transmitter (Alice) aims to reliably send confidential information to a legitimate receiver (Bob) while concealing the very existence of the transmission from Willie. 
Specifically, we focus on a simultaneous multi-modal transmission framework where Alice concurrently utilizes a subset of available modalities to fully exploit their complementary propagation characteristics. 
We analyze two distinct scenarios based on Willie's knowledge: (i) Willie knows the set of modalities selected by Alice, and (ii) Willie does not know the selected set of modalities. 
The main contributions of this paper are summarized as follows:

\begin{itemize}

\item We derive the optimal decision rule for Willie under the scenario where the set of active modalities is known to Willie. Based on this optimal detector, we provide a rigorous analytical derivation of the exact DEP. To facilitate practical analysis and optimization, we further develop a tractable closed-form approximation of the DEP by modeling the test statistic at Willie as a gamma distribution via a two-moment matching method.

\item We investigate the scenario where Willie is unaware of the specific subset of modalities employed by Alice, compelling him to monitor the spectrum of all available modalities. We identify the structure of the optimal detector for this case; however, as the exact DEP analysis is mathematically intractable due to the complex decision boundary, we derive an expression for the DEP in the low signal-to-noise ratio (SNR) regime at Willie—a condition inherent to covert communications. Additionally, we introduce a simplified approximation based on two-moment matching to enable efficient performance evaluation.

\item We formulate a modality set selection problem to maximize the DEP subject to a minimum rate requirement at Bob. To address the combinatorial complexity of the subset selection, we propose a low-complexity algorithm based on a novel metric that quantifies the achievable rate relative to the associated covertness cost. Additionally, we consider three distinct levels of CSI availability at Alice regarding the channels to Willie: (i) full instantaneous CSI, (ii) statistical CSI, and (iii) no CSI.

\item We provide extensive numerical simulation results to validate the accuracy of our derived analytical expressions and the tightness of the proposed approximations. We also evaluate the impact of simultaneous multi-modal selection on covertness performance and verify the efficiency of the proposed algorithm compared to benchmark schemes. Furthermore, we explore how communication covertness varies depending on Willie's knowledge of the active modalities and the available CSI at Alice.

\end{itemize}

The remainder of this paper is organized as follows. 
Section~\ref{sec:network_model} describes the network model. 
In Section~\ref{sec:analysis}, we derive the DEP for the scenario where Willie is fully aware of the selected set of modalities. 
Section~\ref{sec:unknown} extends this analysis to the case where Willie is unaware of the specific subset selected from the available set of modalities. 
Subsequently, Section~\ref{sec:optimization} proposes a novel low-complexity modality set selection algorithm to maximize the DEP while ensuring a minimum desired rate at Bob.
Numerical results are provided in Section~\ref{sec:results} to validate the theoretical analysis and verify the efficiency of the proposed technique. 
Finally, Section~\ref{sec:conclusion} concludes the paper.

Key notations used throughout this paper are summarized in Table~\ref{Table_notation}.
In this paper $x \sim \cC\cN(a,b)$ denotes that $x$ follows a complex Gaussian distribution with mean $a$ and variance $b$, $x \sim \chi^2(n)$ means that $x$ follows a chi-squared distribution with $n$ degrees of freedom, and $x \sim \cG(\alpha,\theta)$ indicates that $x$ follows a gamma distribution with shape parameter $\alpha$ and scale parameter $\theta$.

\begin{table}
\caption {List of Notations}  \vspace{-0mm}
\centering %
\begin{tabular}{|c|c|}
\hline
Notation & Definition \\
\hline
\hline
$M$ & Number of all available modalities     \\
\hline
$\cM$ & Set of all available modalities     \\
\hline
$\sv$ & Selected subset of $\cM$, i.e., $\sv \subseteq \cM$    \\
\hline
$M_{\sv}$ & Number of selected modalities, i.e., $M_{\sv} = |\sv|$    \\
\hline
$L$ & Number of channel uses in a block    \\
\hline
$P_m$ & Transmit power at Alice for modality $m$    \\
\hline
$\Omega_m$ & Allocated Bandwidth of modality $m$    \\
\hline
$N_0$ & Noise power spectral density   \\
\hline
$g_{\text{W},m}$ & Channel between Alice and Willie in modality $m$    \\
\hline
$g_{\text{B},m}$ & Channel between Alice and Bob in modality $m$    \\
\hline
$x_m[l]$ & Transmit data symbol in modality $m$ and channel use $l$     \\
\hline
$n_{\text{W},m}[l]$ & AWGN at Willie in modality $m$ and channel use $l$    \\
\hline
$n_{\text{B},m}[l]$ & AWGN at Bob in modality $m$ and channel use $l$    \\
\hline
$\rho_{\text{W},m}$ & SNR at Willie in modality $m$     \\
\hline
$\rho_{\text{B},m}$ & SNR at Bob in modality $m$     \\
\hline
$\tau$ & Acceptable outage probability at Bob   \\
\hline
$U_{m}$ & Rate at Bob in modality $m$   \\
\hline
$U_{\sv}$ & Sum-rate at Bob with the modalities in $\sv$    \\
\hline
$U_{\text{target}}$ & Target rate at Bob   \\
\hline
$\tP_{\text{DEP},\sv}$ & DEP at Willie with the modalities in $\sv$    \\
\hline
\end{tabular} \label{Table_notation}
\end{table}

\section{Network model} \label{sec:network_model}

In this paper, we examine heterogeneous networks where Alice sends confidential data to Bob using multiple communication modalities simultaneously, while reducing the probability of being detected by Willie as illustrated in Fig.~\ref{figure:model}.
All nodes are equipped with $M$ modalities, where each modality $m_i$ has a distinct operating center frequency $f_{m_i}$, $i=1,\dots,M$, resulting in unique channel propagation characteristics. 
We define $\cM = \{ m_1, \dots, m_M\}$ as the set of all available modalities.
Willie is a passive adversary attempting to detect communication by monitoring the frequency bands associated with all $M$ modalities.

We consider covert communications with finite blocklength~$L$ under a quasi-static fading model, where channels remain constant within a block but vary independently across blocks.
It is assumed that Alice decides whether to transmit data to Bob with equal \textit{a priori} probability, i.e., $\mathbb{P}(\mathcal{H}_0) = \mathbb{P}(\mathcal{H}_1) = 0.5$. 
Here, $\mathcal{H}_0$ stands for the null hypothesis representing the absence of transmission and $\mathcal{H}_1$ designates the alternative hypothesis where a transmission occurs.

\subsection{Detection at Willie}

Let $\sv \subseteq \cM$ denote the set of modalities selected by Alice for transmission to Bob. 
Then, the received signal at Willie on modality $m$ in channel use $l$ is given by
\begin{align} \label{eq:Received_signal_Willie}
	&\cH_0:  	y_{\text{W},m}[l] = n_{\text{W},m}[l], ~ \forall m \in \cM,  \\ \nonumber 
	&\cH_1:  	y_{\text{W},m}[l] = 
\begin{cases}
n_{\text{W},m}[l], ~ \forall m \notin \sv, \\
\sqrt{P_m} g_{\text{W},m} x_m[l] + n_{\text{W},m}[l], ~ \forall m \in \sv,
\end{cases}
\end{align}
where $x_m[l] \sim \cC\cN(0,1)$ indicates the data symbol transmitted by Alice through modality $m$ in channel use $l$.
Here, $P_m$ means the transmit power for modality $m$ and $g_{\text{W},m}$ is the channel between Alice and Willie for modality $m$. 
In addition, $n_{\text{W},m}[l] \sim \cC\cN(0,\Omega_m N_0)$ signifies the AWGN at Willie where $\Omega_m$ and $N_0$ represent the allocated bandwidth of modality $m$ and noise spectral density, respectively.

In this paper, we assume a highly sophisticated Willie who possesses full information about the channels from Alice to himself, the transmit powers, and the allocated bandwidths for all modalities.
This represents a worst-case scenario from a covertness perspective. 
Regarding Willie's knowledge of the channels from Alice to Bob, we consider two distinct cases: (i) Willie is aware of these channels, enabling him to infer the set of modalities $\sv$ selected by Alice; and (ii) Willie is uninformed of these channels, and consequently, remains ignorant of the selected set $\sv$.

The goal of Willie is to optimize his detection performance based on his knowledge of the wireless environment. 
By observing the received signals over $L$ channel uses, Willie makes a decision regarding the presence of a transmission. 
Let $\mathcal{D}_0$ and $\mathcal{D}_1$ denote the decisions in favor of $\mathcal{H}_0$ and $\mathcal{H}_1$, respectively. 
When the modality set $\sv$ is chosen by Alice, the DEP at Willie is expressed as
\begin{align} \label{eq:DEP_def_1}
    \tP_{\text{DEP},\sv} &= \tP_{\text{MD},\sv} + \tP_{\text{FA},\sv},   
\end{align}
where $\tP_{\text{MD},\sv} = \PP(  \cD_0 | \cH_1 )$ represents the missed detection probability and $\tP_{\text{FA},\sv} = \PP( \cD_1 | \cH_0 )$ accounts for the false alarm probability. 
The optimal detector at Willie is designed to minimize the DEP in~\eqref{eq:DEP_def_1}.

\begin{figure}
\begin{center} 
\includegraphics[width=3.3in]{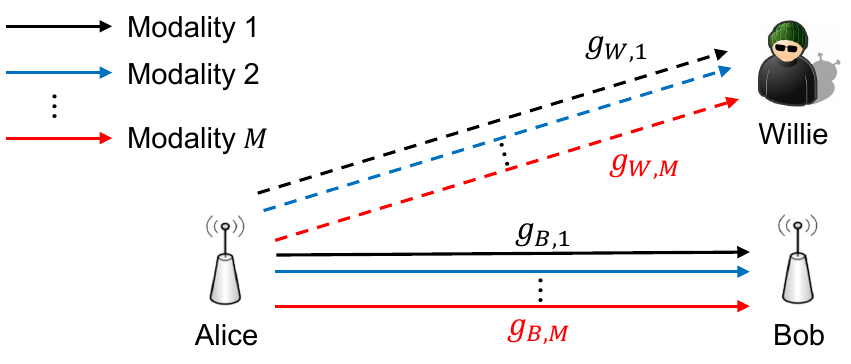}
\end{center} 
\vspace{-6.0mm}
\caption{Network model for simultaneous multi-modal covert communications.}
\label{figure:model}
\end{figure}

\subsection{Communication Performance at Bob}

For any given modality $m \in \sv$ and channel use $l$, the received signal at Bob becomes
\begin{align} \label{eq:Received_signal_Bob}
	y_{\text{B},m}[l] = \sqrt{P_m} g_{\text{B},m} x_m[l] + n_{\text{B},m}[l],
\end{align}
where $g_{\text{B},m}$ denotes the channel between Alice and Bob for modality $m$ and $n_{\text{B},m}[l] \sim \cC\cN(0,\Omega_m N_0)$ is the AWGN at Bob. 
Then, the SNR at Bob in modality $m \in \sv$ is written as 
\begin{align} \label{eq:SNR_Bob}
	\rho_{\text{B},m} = \frac{P_m \left| g_{\text{B},m} \right|^2 }{\Omega_m N_0}.
\end{align}

With the finite blocklength, the achieved rate (bps) at Bob for modality $m \in \sv$ is given by~\cite{Polyanskiy:10,Yan:19b}
\begin{align} \label{eq:rate_each_mode_finite}
	U_m \!=\!  \Omega_m \! \left[ \! \log_2 \left(1 \!+\! \rho_{\text{B},m} \right) \!-\! \frac{1}{\ln(2)} \! \sqrt{\frac{1\!\!-\!\left(1 \!+\! \rho_{\text{B},m}\right)^{-2}}{L}} Q^{-1}\!(\tau) \! \right]\!,
\end{align}
where $\tau$ is the acceptable outage probability and $Q^{-1}(\cdot)$ denotes the inverse of the Gaussian Q-function.
Then, the sum-rate at Bob is defined as
\begin{align} \label{eq:sum_rate}
	U_{\sv} = \sum_{m \in \sv} U_m.
\end{align}

\subsection{Optimization Problem}

We investigate the modality set selection problem to maximize the DEP at Willie while ensuring a minimum required rate at Bob. 
The optimization problem is formulated as
\begin{align} \label{eq:problem}
	&\underset{ \sv \subseteq \cM }{\max} ~~ \tP_{\text{DEP},\sv}  \\ \label{eq:rate_constraint}
	&\quad \text{s.t.} ~ U_{\sv} \geq U_{\text{target}},
\end{align} 
where $U_{\text{target}}$ represents the minimum rate requirement at Bob.



\section{Detection by a Fully-Aware Adversary}  \label{sec:analysis}

In this section, we analyze the DEP for the scenario where Willie is aware of the selected modality set~$\sv$. 
In this case, Willie measures the received signals only on the modalities in~$\sv$.

\subsection{Optimal Decision Rule}

The distribution of the received signal at Willie for $m \in \sv$ is given by 
\begin{align} \label{eq:Received_signal_Willie_distribution}
	&\cH_0:  	y_{\text{W},m}[l] \sim \cC\cN \left(0, \sigma_{0,m}^2 \right),  \\ \nonumber 
	&\cH_1:  	y_{\text{W},m}[l] \sim  \cC\cN \left(0,  \sigma_{1,m}^2 \right),
\end{align}
where
\begin{align} \label{eq:Received_signal_Willie_variance}
	&\sigma_{0,m}^2 =  \Omega_m N_0, \\
	& \sigma_{1,m}^2 =  P_m \left| g_{\text{W},m} \right|^2 + \Omega_m N_0. \nonumber 
\end{align}

Let $Y_{\text{W}}$ define the observation at Willie over all channel uses $l=1,\dots,L$ and modalities $m_1,\dots,m_{M_{\sv}} \in \sv$ where $M_{\sv} = |\sv|$.
We express the likelihood as
\begin{align} \label{eq:Likelihood}
	f(Y_{\text{W}} |\cH_k) = \prod_{ m \in \sv} \prod_{l=1}^L \frac{1}{ \pi \sigma_{k,m}^2 } \exp\bigg( - \frac{ \left| y_{\text{W},m}[l] \right|^2 }{  \sigma_{k,m}^2  } \bigg),
\end{align}
where $k \in \{0,1\}$. 
Then, the likelihood ratio test (LRT) is written as
\begin{align} \label{eq:LRT}
	\Lambda(Y_{\text{W}}) = \frac{f(Y_{\text{W}} |\cH_1)}{f(Y_{\text{W}} |\cH_0)} \underset{\cD_0}{\overset{\cD_1}{\gtrless}}  1.
\end{align}

\begin{theorem} \label{Theorem:optimal_detector_known_set}
When the selected subset of modalities $\sv$ is known to Willie, the optimal decision rule at Willie that minimizes the DEP $\tP_{\text{DEP},\sv}$ is
\begin{align} \label{eq:decision_criteria}
	T_{\sv} \underset{\cD_0}{\overset{\cD_1}{\gtrless}} \delta_{\sv},
\end{align} 
where the SNR at Willie $\rho_{\text{W},m}$, weight $w_m$, received signal strength $E_m$, test statistic $T_{\sv}$, and detection threshold $\delta_{\sv}$ are respectively expressed as 
\begin{align} \label{eq:SNR_W_m} 
    &\rho_{\text{W},m} = \frac{P_m \left| g_{\text{W},m} \right|^2}{\Omega_m N_0}, \\ \label{eq:def_weight}
	&w_m = \frac{\rho_{\text{W},m}}{ \Omega_m N_0 \left( 1+\rho_{\text{W},m} \right)   }, \\  \label{eq:def_Energy}
    &E_m =   \sum_{l=1}^L  \left| y_{\text{W},m}[l] \right|^2, \\ \label{eq:T_def}
    &T_{\sv} = \sum_{m \in \sv} w_m E_m, \\ \label{eq:def_threshold}
    &\delta_{\sv} = L \sum_{m \in \sv} \ln\left(  1 + \rho_{\text{W},m}     \right).
\end{align}
\end{theorem}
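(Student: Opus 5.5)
The argument has two parts: first establish that the likelihood ratio test in \eqref{eq:LRT} with unit threshold is the DEP-minimizing rule, and then simplify that test algebraically until it takes the form \eqref{eq:decision_criteria}.

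For the optimality step, write the DEP for an arbitrary decision region $\cR_1$ (the set of observations $Y_{\text{W}}$ mapped to $\cD_1$) as
\begin{align*}
\tP_{\text{DEP},\sv} = 1 - \int_{\cR_1} \left[ f(Y_{\text{W}}|\cH_1) - f(Y_{\text{W}}|\cH_0) \right] \d Y_{\text{W}} .
\end{align*}
This holds because $\tP_{\text{MD},\sv} = 1 - \int_{\cR_1} f(Y_{\text{W}}|\cH_1)\,\d Y_{\text{W}}$ and $\tP_{\text{FA},\sv} = \int_{\cR_1} f(Y_{\text{W}}|\cH_0)\,\d Y_{\text{W}}$. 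The expression is minimized by placing in $\cR_1$ exactly those observations where the integrand is positive, that is, where $\Lambda(Y_{\text{W}})>1$. This is the standard Neyman--Pearson/Bayes argument with equal priors, and it is consistent with the equal-weight DEP definition \eqref{eq:DEP_def_1}. Ties where $\Lambda=1$ have probability zero under both hypotheses, because the decision boundary is a hyperplane in the variables $|y_{\text{W},m}[l]|^2$.

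Next, substitute \eqref{eq:Likelihood} into the test and take logarithms, which is allowed since $\ln$ is monotone. This gives
\begin{align*}
\ln \Lambda = \sum_{m\in\sv} \left[ -L \ln\frac{\sigma_{1,m}^2}{\sigma_{0,m}^2} + \left(\frac{1}{\sigma_{0,m}^2}-\frac{1}{\sigma_{1,m}^2}\right) E_m \right] \underset{\cD_0}{\overset{\cD_1}{\gtrless}} 0,
\end{align*}
where $E_m$ is defined in \eqref{eq:def_Energy}. Using \eqref{eq:Received_signal_Willie_variance} and \eqref{eq:SNR_W_m}, we have $\sigma_{1,m}^2/\sigma_{0,m}^2 = 1+\rho_{\text{W},m}$, and
\begin{align*}
\frac{1}{\sigma_{0,m}^2}-\frac{1}{\sigma_{1,m}^2} = \frac{\rho_{\text{W},m}}{\Omega_m N_0(1+\rho_{\text{W},m})} = w_m ,
\end{align*}
which matches \eqref{eq:def_weight}. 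Moving the constant terms to the right-hand side then yields exactly $T_{\sv} \gtrless \delta_{\sv}$, with $T_{\sv}$ given by \eqref{eq:T_def} and $\delta_{\sv}$ by \eqref{eq:def_threshold}.

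There is no real obstacle here. The one point needing care is the direction of the inequality after rearranging. It is preserved because every $w_m>0$ whenever $\rho_{\text{W},m}>0$, so no division by a negative quantity occurs; equivalently, we only ever add constants to both sides. It is also worth noting that the derivation uses Willie's full knowledge of $\sv$, $P_m$, $g_{\text{W},m}$ and $\Omega_m$: this is what makes both hypotheses simple (fully specified), so the LRT is exactly optimal rather than requiring a composite-hypothesis treatment.
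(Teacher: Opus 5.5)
Your proposal is correct and matches the paper's proof: the paper likewise takes the logarithm of the likelihood ratio in \eqref{eq:LRT}, identifies $w_m = 1/\sigma_{0,m}^2 - 1/\sigma_{1,m}^2$ and $\delta_{\sv} = L\sum_{m\in\sv}\ln(\sigma_{1,m}^2/\sigma_{0,m}^2)$, and moves the constant term to the right-hand side. The only difference is that you also justify why the unit-threshold LRT minimizes $\tP_{\text{MD},\sv}+\tP_{\text{FA},\sv}$, a step the paper takes for granted.
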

\begin{proof}
See Appendix~\ref{appendix:optimal_detector_known_set}.  
\end{proof}

Unlike a standard radiometer that simply sums the total received signal power, the optimal detector in~\eqref{eq:decision_criteria} strategically aggregates received signal power across active modalities by applying weights $\{ w_m \}$ to the energy measured in the modalities.
The weight for modality $m$ scales with the SNR $\rho_{\text{W},m}$, implying that Willie places greater emphasis on modalities where the signal is expected to be strong. 
Furthermore, Willie adjusts the detection threshold $\delta_{\sv}$ based on the SNRs to account for the varying received signal strengths.

\subsection{Exact DEP analysis}

Applying the detector in~\eqref{eq:decision_criteria}, the DEP in~\eqref{eq:DEP_def_1} is given by
\begin{align} \label{eq:DEP_def}
    \tP_{\text{DEP},\sv} &= \PP(  \cD_0 | \cH_1 ) + \PP(  \cD_1 | \cH_0 ) \nonumber \\
                     &= \PP\left(  T_{\sv} < \delta_{\sv} \Big| \cH_1 \right) + \PP\left(  T_{\sv} > \delta_{\sv} \Big| \cH_0 \right).
\end{align}
The exact mathematical expression of the DEP is identified in the following theorem.

\begin{theorem} \label{Theorem:exact_DEP}
The exact DEP with the optimal detector at Willie in~\eqref{eq:decision_criteria} is 
\begin{align} \label{eq:DEP_from_CF}
    \tP_{\text{DEP},\sv} = 1 \!-\! \frac{1}{\pi} \int_0^\infty \! \frac{1}{t} \Im \left[ e^{-it \delta_{\sv}} \left( \Phi_{T_{\sv} | \cH_1}(t) - \Phi_{T_{\sv} | \cH_0}(t) \right) \right] \d t,
\end{align}
where $\Im[x]$ denotes the imaginary part of $x$ and the characteristic function (CF) of the test statistic $T_{\sv}$ is
\begin{align} \label{eq:CF_of_T}
	\Phi_{T_{\sv}|\cH_k}(t) = \prod_{m \in \sv} \bigg( \frac{1}{1 - i t w_m \sigma_{k,m}^2} \bigg)^{L},
\end{align}
for $k \in \{0, 1\}$.
\end{theorem}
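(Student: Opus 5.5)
The plan has two parts: first derive the characteristic function of $T_{\sv}$ under each hypothesis, then convert the two error probabilities in~\eqref{eq:DEP_def} into a single inversion integral using the Gil-Pelaez formula.

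\textbf{Step 1: characteristic function of $T_{\sv}$.} Under $\cH_k$, the samples $y_{\text{W},m}[l]$ are independent $\cC\cN(0,\sigma_{k,m}^2)$ across $m$ and $l$, by~\eqref{eq:Received_signal_Willie_distribution}. Hence each $|y_{\text{W},m}[l]|^2$ is exponential with mean $\sigma_{k,m}^2$, and its CF is $(1-it\sigma_{k,m}^2)^{-1}$.

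Summing over $l$, the energy $E_m$ is gamma distributed, $E_m\sim\cG(L,\sigma_{k,m}^2)$. Scaling by $w_m>0$ gives $w_mE_m\sim\cG(L,w_m\sigma_{k,m}^2)$, whose CF is $(1-itw_m\sigma_{k,m}^2)^{-L}$.

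The energies $E_m$ are independent across modalities. So the CF of $T_{\sv}=\sum_{m\in\sv}w_mE_m$ is the product of these factors, which is exactly~\eqref{eq:CF_of_T}.

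\textbf{Step 2: Gil-Pelaez inversion.} I will invoke the Gil-Pelaez inversion theorem for a real random variable $X$ with CF $\Phi_X$:
\begin{align*}
F_X(x)=\frac12-\frac1\pi\int_0^\infty \frac{1}{t}\Im\left[e^{-itx}\Phi_X(t)\right]\d t .
\end{align*}
$T_{\sv}$ has a continuous density, being a positive combination of gamma variables with $L\ge1$. Therefore $\PP(T_{\sv}=\delta_{\sv})=0$, and strict versus non-strict inequalities in~\eqref{eq:DEP_def} do not matter.

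Applying the formula to each term gives
\begin{align*}
\PP(T_{\sv}<\delta_{\sv}|\cH_1)&=\frac12-\frac1\pi\int_0^\infty\frac1t\Im\left[e^{-it\delta_{\sv}}\Phi_{T_{\sv}|\cH_1}(t)\right]\d t,\\
\PP(T_{\sv}>\delta_{\sv}|\cH_0)&=\frac12+\frac1\pi\int_0^\infty\frac1t\Im\left[e^{-it\delta_{\sv}}\Phi_{T_{\sv}|\cH_0}(t)\right]\d t.
\end{align*}
Adding the two lines and using linearity of $\Im[\cdot]$ and of the integral yields~\eqref{eq:DEP_from_CF}.

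\textbf{Main obstacle: convergence.} The only delicate point is justifying that each Gil-Pelaez integral converges, so that the two can be combined into one.

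Near $t=0$, the integrand $\frac1t\Im[e^{-it\delta}\Phi(t)]$ is bounded because $T_{\sv}$ has a finite mean. Indeed, $\Im[e^{-it\delta}\Phi(t)]=O(t)$ as $t\to0$.

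As $t\to\infty$, $|\Phi(t)|=\prod_{m\in\sv}(1+t^2w_m^2\sigma_{k,m}^4)^{-L/2}$ decays at least like $t^{-L}$. Hence the integrand is $O(t^{-1-L})$ and is absolutely integrable for $L\ge1$.

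Each integral therefore converges absolutely, so splitting and recombining the integrals is legitimate.
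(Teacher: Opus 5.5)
Your proposal is correct and follows the paper's proof essentially step for step: establish $w_mE_m\sim\cG(L,w_m\sigma_{k,m}^2)$, multiply the per-modality CFs using independence, then apply the Gil-Pelaez inversion to both error terms and add them. Your extra remarks are rigor the paper omits, but they do not change the route: $\PP(T_{\sv}=\delta_{\sv})=0$, and the inversion integrals converge absolutely (the integrand is bounded near $t=0$ and is $O(t^{-1-L})$ as $t\to\infty$).
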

\begin{proof}
See Appendix~\ref{appendix:exact_DEP}.  
\end{proof}

In~\eqref{eq:DEP_from_CF}, the difference term $\Phi_{T_{\sv}|\cH_1}(t)-\Phi_{T_{\sv}|\cH_0}(t)$ represents the statistical separability of the received energy distributions under the two hypotheses.
The exponential term $e^{-it \delta_{\sv}}$ acts as a phase rotation determined by the detection threshold $\delta_{\sv}$, effectively aligning the decision boundary with these distributions.
In this sense, the DEP expression in~\eqref{eq:DEP_from_CF} can be interpreted as the total region of confusion, measuring the detector's fundamental performance limit.

\subsection{Closed-Form DEP Approximation} \label{subsec:approx_DEP_known}

Although the expression in~\eqref{eq:DEP_from_CF} yields the exact DEP, evaluating it incurs high computational complexity due to the need for numerical integration of the CFs. 
The main difficulty in analyzing the DEP arises from the complicated distribution of the test statistic $T_{\sv}$ in~\eqref{eq:T_def}, which is a weighted sum of independent random variables. 
To address this, we approximate $T_{\sv}$ as a single gamma variable via a two-moment matching method. 
In the following lemma, we introduce a low-complexity approximation of the DEP based on this approach.

\begin{lemma} \label{Lemma:approx_DEP}
The DEP $\tP_{\text{DEP},\sv}$ in~\eqref{eq:DEP_def_1} with the optimal detector in~\eqref{eq:decision_criteria} is approximated as 
\begin{align} \label{eq:DEP_moment_matching}
    \tP_{\text{DEP},\sv} &\approx 1 + \frac{\gamma\left( \alpha_{\sv,1}, \frac{\delta_{\sv}}{\theta_{\sv,1}} \right)}{\Gamma\left(\alpha_{\sv,1}\right)} - \frac{\gamma\left( \alpha_{\sv,0}, \frac{\delta_{\sv}}{\theta_{\sv,0}} \right)}{\Gamma\left(\alpha_{\sv,0}\right)},
\end{align}
where $\alpha_{\sv,k}$ and $\theta_{\sv,k}$ are respectively defined in~\eqref{eq:shape_approx} and~\eqref{eq:scale_approx} for $k \in \{ 0, 1 \}$. 
Here, $\Gamma(s)=\int_0^\infty t^{s-1} e^{-t} \d t$ and $\gamma(s,x) = \int_0^x t^{s-1} e^{-t} \d t$ are the gamma function and lower incomplete gamma function, respectively.
\end{lemma}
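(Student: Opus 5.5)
Under $\cH_k$, each received sample $y_{\text{W},m}[l]$ is $\cC\cN(0,\sigma_{k,m}^2)$, so $|y_{\text{W},m}[l]|^2$ is exponential with mean $\sigma_{k,m}^2$. Hence $E_m$ is a sum of $L$ i.i.d.\ exponentials, i.e., $E_m \sim \cG(L,\sigma_{k,m}^2)$. The $E_m$ are independent across modalities. The test statistic $T_{\sv}$ in~\eqref{eq:T_def} is therefore a weighted sum of independent gamma variables $w_m E_m \sim \cG(L, w_m\sigma_{k,m}^2)$; this is consistent with the CF in~\eqref{eq:CF_of_T}.

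The plan is to replace this sum by a single gamma variable $\cG(\alpha_{\sv,k},\theta_{\sv,k})$ whose first two moments match those of $T_{\sv}$ under each hypothesis. By independence the mean and variance are
\begin{align*}
\EE[T_{\sv}|\cH_k] &= L\sum_{m\in\sv} w_m\sigma_{k,m}^2, \\
\mathrm{Var}[T_{\sv}|\cH_k] &= L\sum_{m\in\sv} w_m^2\sigma_{k,m}^4 .
\end{align*}
A gamma variable has mean $\alpha\theta$ and variance $\alpha\theta^2$. Equating moments gives
\begin{align*}
\alpha_{\sv,k} &= \frac{L\big(\sum_{m\in\sv} w_m\sigma_{k,m}^2\big)^2}{\sum_{m\in\sv} w_m^2\sigma_{k,m}^4}, \\
\theta_{\sv,k} &= \frac{\sum_{m\in\sv} w_m^2\sigma_{k,m}^4}{\sum_{m\in\sv} w_m\sigma_{k,m}^2},
\end{align*}
which I expect to be the definitions~\eqref{eq:shape_approx} and~\eqref{eq:scale_approx}. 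These can also be written in terms of $\rho_{\text{W},m}$. Substituting~\eqref{eq:def_weight} gives $w_m\sigma_{0,m}^2=\rho_{\text{W},m}/(1+\rho_{\text{W},m})$ and $w_m\sigma_{1,m}^2=\rho_{\text{W},m}$.

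Next, I will insert the approximate distributions into~\eqref{eq:DEP_def}. The gamma CDF is $F(x)=\gamma(\alpha,x/\theta)/\Gamma(\alpha)$. This gives
\begin{align*}
\PP(T_{\sv}<\delta_{\sv}|\cH_1) &\approx \gamma(\alpha_{\sv,1},\delta_{\sv}/\theta_{\sv,1})/\Gamma(\alpha_{\sv,1}), \\
\PP(T_{\sv}>\delta_{\sv}|\cH_0) &\approx 1-\gamma(\alpha_{\sv,0},\delta_{\sv}/\theta_{\sv,0})/\Gamma(\alpha_{\sv,0}).
\end{align*}
Adding the two terms yields~\eqref{eq:DEP_moment_matching}. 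Since the distributions are continuous, ties at the threshold have probability zero and can be ignored.

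The only nontrivial point is justifying the matching step. It is exact when all the scales $w_m\sigma_{k,m}^2$ are equal, for instance when the SNRs are equal across modalities. In that case the sum is exactly gamma distributed with shape $LM_{\sv}$. Otherwise the result is an approximation, since a weighted sum of gammas is not gamma distributed. Because the statement only claims an approximation, I will simply note that the approach is the standard Welch--Satterthwaite-type fit. Accuracy is then assessed numerically against Theorem~\ref{Theorem:exact_DEP}.
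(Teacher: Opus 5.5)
Your proposal is correct and follows the paper's proof in Appendix~\ref{appendix:approx_DEP} essentially step for step. You compute the per-hypothesis mean $L\sum_{m\in\sv} w_m\sigma_{k,m}^2$ and variance $L\sum_{m\in\sv}(w_m\sigma_{k,m}^2)^2$ of $T_{\sv}$, match them to a gamma distribution to obtain exactly~\eqref{eq:shape_approx} and~\eqref{eq:scale_approx}, and then substitute the gamma CDF into $1+\PP(T_{\sv}<\delta_{\sv}|\cH_1)-\PP(T_{\sv}<\delta_{\sv}|\cH_0)$, just as the paper does.
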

\begin{proof}
See Appendix~\ref{appendix:approx_DEP}.  
\end{proof}

Compared to the exact expression in~\eqref{eq:DEP_from_CF} which requires numerical integration of complex CFs, the approximation in~\eqref{eq:DEP_moment_matching} achieves significant computational efficiency by reducing the calculation to standard incomplete gamma functions, whose values can be rapidly obtained via table lookup or standard software functions. 

The approximation in Lemma~\ref{Lemma:approx_DEP} becomes identical to the exact DEP when $\{w_m\}$ and $\{\sigma_{k,m}\}$ are equal across modalities since the sum of $M_{\sv}$ independent and identically distributed (i.i.d.) gamma random variables $\cG(\alpha,\theta)$ is a gamma random variable $\cG( M_{\sv} \alpha,\theta)$.
The numerical simulations in Section~\ref{sec:results} validate that this approximation remains tight in the general case where $\{w_m\}$ and $\{\sigma_{k,m}\}$ vary across modalities.

\begin{corollary} 
As a special case, when only a single modality $m$ is selected by Alice, i.e., $M_{\sv} = 1$, the expression in~\eqref{eq:DEP_moment_matching} is equal to the exact DEP. 
In this case, the DEP is simplified to
\begin{align} \label{eq:DEP_single_mod}
    \tP_{\text{DEP},m} &= 1 - \frac{1}{\Gamma(L)} \left[  \gamma\left( L, \frac{\delta_m}{w_m \sigma_{0,m}^2} \right) - \gamma\left( L, \frac{\delta_m}{w_m  \sigma_{1,m}^2} \right)            \right] \nonumber \\
    &= 1 - \frac{1}{\Gamma(L)} \Bigg[  \gamma\left( L, L \left( 1 + \frac{ 1 }{ \rho_{\text{W},m} }  \right)\ln\left(  1 + \rho_{\text{W},m}    \right) \right) \nonumber\\
    &\qquad- \gamma\left( L, L  \frac{1 }{ \rho_{\text{W},m} } \ln\left(  1 + \rho_{\text{W},m}     \right) \right)     \Bigg],
\end{align}
where $\delta_m = L\ln\left(  1 + \rho_{\text{W},m}    \right)$.
The expression in~\eqref{eq:DEP_single_mod} is identical to the result in~\cite{Kong:22WCNC}. 
\end{corollary}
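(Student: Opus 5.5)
When $M_{\sv}=1$ with $\sv=\{m\}$, the test statistic in~\eqref{eq:T_def} reduces to $T_{\sv}=w_m E_m$. The plan is to show that this variable is \emph{exactly} gamma distributed under both hypotheses. The two-moment matching of Lemma~\ref{Lemma:approx_DEP} would then reproduce the true distribution, so~\eqref{eq:DEP_moment_matching} is exact rather than approximate. After that, the thresholds only need to be rewritten in terms of $\rho_{\text{W},m}$.

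\textbf{Step 1 (exact distribution).} Under $\cH_k$ the samples $y_{\text{W},m}[l]\sim\cC\cN(0,\sigma_{k,m}^2)$ are i.i.d. over $l$, so each $|y_{\text{W},m}[l]|^2$ is exponential with mean $\sigma_{k,m}^2$. Hence $E_m\sim\cG(L,\sigma_{k,m}^2)$ and $T_{\sv}=w_mE_m\sim\cG(L,w_m\sigma_{k,m}^2)$. The same conclusion follows from~\eqref{eq:CF_of_T} with a single factor, $(1-itw_m\sigma_{k,m}^2)^{-L}$, which is the gamma CF.

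\textbf{Step 2 (moment matching is exact).} Matching mean $\alpha\theta$ and variance $\alpha\theta^2$ to those of $T_{\sv}$ gives a unique solution. Since the true law is gamma, that solution must be $\alpha_{\sv,k}=L$ and $\theta_{\sv,k}=w_m\sigma_{k,m}^2$, which I would confirm by plugging into~\eqref{eq:shape_approx}--\eqref{eq:scale_approx}. The only point needing care is checking that those definitions, specialized to one term, really collapse to these values.

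\textbf{Step 3 (DEP and closed form).} From~\eqref{eq:DEP_def},
\begin{align*}
\tP_{\text{DEP},m}=\PP(T<\delta_m|\cH_1)+1-\PP(T<\delta_m|\cH_0),
\end{align*}
and the gamma CDF $\gamma(L,x/\theta)/\Gamma(L)$ gives the first line of~\eqref{eq:DEP_single_mod}, with $\delta_m=L\ln(1+\rho_{\text{W},m})$ taken from~\eqref{eq:def_threshold}. It remains to simplify the arguments of $\gamma$ using~\eqref{eq:Received_signal_Willie_variance},~\eqref{eq:SNR_W_m} and~\eqref{eq:def_weight}:
\begin{align*}
w_m\sigma_{0,m}^2=\frac{\rho_{\text{W},m}}{1+\rho_{\text{W},m}},\qquad w_m\sigma_{1,m}^2=\rho_{\text{W},m},
\end{align*}
where the second identity uses $\sigma_{1,m}^2=\Omega_mN_0(1+\rho_{\text{W},m})$. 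Substituting yields the arguments $L(1+1/\rho_{\text{W},m})\ln(1+\rho_{\text{W},m})$ and $L\ln(1+\rho_{\text{W},m})/\rho_{\text{W},m}$, i.e., the second form of~\eqref{eq:DEP_single_mod}.

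\textbf{Step 4 (comparison with prior work).} I would compare the result directly with the expression in~\cite{Kong:22WCNC}. I expect no real obstacle anywhere; the most error-prone step is the algebraic simplification in Step 3.
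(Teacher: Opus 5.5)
Your proposal is correct and follows the paper's own (implicit) justification. Appendix~\ref{appendix:exact_DEP} shows that $w_mE_m\sim\cG(L,w_m\sigma_{k,m}^2)$ exactly, so the single-term parameters in \eqref{eq:shape_approx}--\eqref{eq:scale_approx} collapse to $\alpha_{\sv,k}=L$ and $\theta_{\sv,k}=w_m\sigma_{k,m}^2$, and your identities $w_m\sigma_{0,m}^2=\rho_{\text{W},m}/(1+\rho_{\text{W},m})$ and $w_m\sigma_{1,m}^2=\rho_{\text{W},m}$ give the second form. The only difference is cosmetic: you get the gamma law by summing exponentials, whereas the paper scales a $\chi^2(2L)$ variable.
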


\begin{corollary} 
As a particular case, when the data symbols across all $M_{\sv}$ modalities in $\sv$ are identical, i.e., $x_m[l] = x[l], \forall m \in \sv$, the network model becomes equivalent to a scenario where Willie is equipped with $M_{\sv}$ antennas in terms of the DEP analysis. 
In this scenario, the optimal detector at Willie performs maximum ratio combining (MRC), and the corresponding SNR at Willie is $\rho_{\text{W},\sv}^{(\text{MRC})} = \sum_{m \in \sv} \rho_{\text{W},m}$. 
Then, in the same way as in~\eqref{eq:DEP_single_mod}, the resultant DEP is written as
\begin{align} \label{eq:DEP_MRC}
    \tP_{\text{DEP},\sv}^{(\text{MRC})} &= 1 \!-\! \frac{1}{\Gamma(L)} \Bigg[  \gamma\Bigg( L, L \Bigg( 1 + \frac{ 1 }{ \rho_{\text{W},\sv}^{(\text{MRC})} }  \Bigg)\ln\left(  1 + \rho_{\text{W},\sv}^{(\text{MRC})}    \right) \!\! \Bigg) \nonumber \\
    &\qquad \quad - \gamma\Bigg( L, L  \frac{1 }{ \rho_{\text{W},\sv}^{(\text{MRC})} } \ln\left(  1 + \rho_{\text{W},\sv}^{(\text{MRC})}    \right) \Bigg)     \Bigg].
\end{align}
\end{corollary}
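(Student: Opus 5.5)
The plan is to reduce the identical-symbol case to a single-antenna-style problem by sufficient statistics, and then reuse the single-modality computation of \eqref{eq:DEP_single_mod}.

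\textbf{Step 1: write the model as a multi-antenna observation.} With $x_m[l]=x[l]$ for all $m\in\sv$, the observation in channel use $l$ is the vector $\mathbf{y}[l]=[y_{\text{W},m}[l]]_{m\in\sv}$. Under $\cH_0$ we have $\mathbf{y}[l]\sim\cC\cN(\mathbf{0},\mathbf{D})$ with $\mathbf{D}=\mathrm{diag}(\Omega_m N_0)$. Under $\cH_1$ we have $\mathbf{y}[l]\sim\cC\cN(\mathbf{0},\mathbf{D}+\mathbf{h}\mathbf{h}^H)$, where $h_m=\sqrt{P_m}g_{\text{W},m}$. This is exactly the model of a single source observed through $M_{\sv}$ antennas with independent, possibly unequal, noise powers. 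The vectors are i.i.d. over $l$.

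\textbf{Step 2: whiten and compute the log-likelihood ratio.} Set $\tilde{\mathbf{y}}[l]=\mathbf{D}^{-1/2}\mathbf{y}[l]$ and $\tilde{\mathbf{h}}=\mathbf{D}^{-1/2}\mathbf{h}$, so that $\|\tilde{\mathbf{h}}\|^2=\sum_{m\in\sv}\rho_{\text{W},m}=\rho_{\text{W},\sv}^{(\text{MRC})}$.
\begin{itemize}
\item By the matrix determinant lemma, $\det(\mathbf{I}+\tilde{\mathbf{h}}\tilde{\mathbf{h}}^H)=1+\rho_{\text{W},\sv}^{(\text{MRC})}$.
\item By Sherman--Morrison, $(\mathbf{I}+\tilde{\mathbf{h}}\tilde{\mathbf{h}}^H)^{-1}=\mathbf{I}-\tilde{\mathbf{h}}\tilde{\mathbf{h}}^H/(1+\rho_{\text{W},\sv}^{(\text{MRC})})$.
\end{itemize}
The log-LRT then depends on the data only through $\sum_l |\tilde{\mathbf{h}}^H\tilde{\mathbf{y}}[l]|^2$, which is the MRC output energy. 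The resulting test is
\[
\sum_{l=1}^L \frac{\left|\tilde{\mathbf{h}}^H\tilde{\mathbf{y}}[l]\right|^2}{\rho_{\text{W},\sv}^{(\text{MRC})}\left(1+\rho_{\text{W},\sv}^{(\text{MRC})}\right)} \underset{\cD_0}{\overset{\cD_1}{\gtrless}} L\ln\left(1+\rho_{\text{W},\sv}^{(\text{MRC})}\right).
\]

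\textbf{Step 3: identify the distributions and reuse the single-modality result.} Define the scalar $z[l]=\tilde{\mathbf{h}}^H\tilde{\mathbf{y}}[l]/\|\tilde{\mathbf{h}}\|$. Then $z[l]\sim\cC\cN(0,1)$ under $\cH_0$ and $z[l]\sim\cC\cN(0,1+\rho_{\text{W},\sv}^{(\text{MRC})})$ under $\cH_1$, i.i.d. across $l$. Rewritten in terms of $z[l]$, the test becomes
\[
\frac{\rho_{\text{W},\sv}^{(\text{MRC})}}{1+\rho_{\text{W},\sv}^{(\text{MRC})}}\sum_{l=1}^L |z[l]|^2 \underset{\cD_0}{\overset{\cD_1}{\gtrless}} L\ln\left(1+\rho_{\text{W},\sv}^{(\text{MRC})}\right).
\]
This coincides with Theorem~\ref{Theorem:optimal_detector_known_set} applied to one modality with unit noise power and SNR $\rho_{\text{W},\sv}^{(\text{MRC})}$. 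The DEP depends only on the SNR, because the threshold ratios $\delta_m/(w_m\sigma_{k,m}^2)$ in \eqref{eq:DEP_single_mod} are scale-invariant. Under each hypothesis $\sum_l |z[l]|^2$ is gamma distributed with shape $L$, so the single-modality derivation yields \eqref{eq:DEP_MRC} by substituting $\rho_{\text{W},m}\to\rho_{\text{W},\sv}^{(\text{MRC})}$.

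The main obstacle is Step 2, which must justify rigorously that the optimal test is the MRC statistic. This amounts to the careful rank-one inverse/determinant algebra, together with checking that no other component of $\tilde{\mathbf{y}}[l]$ carries information. The components orthogonal to $\tilde{\mathbf{h}}$ have identical distributions under both hypotheses and are independent of $z[l]$, so they drop out of the likelihood ratio.
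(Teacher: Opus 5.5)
Your proposal is correct and follows the paper's own route: identical symbols give a single-source, $M_{\sv}$-antenna model, so the optimal detector is MRC with effective SNR $\rho_{\text{W},\sv}^{(\text{MRC})}=\sum_{m\in\sv}\rho_{\text{W},m}$, and the single-modality formula \eqref{eq:DEP_single_mod} then applies; you additionally justify the MRC step via whitening (which correctly handles unequal noise powers $\Omega_m N_0$) and the rank-one determinant/inverse identities, whereas the paper only asserts it. One minor slip: in your Step~2 display the denominator should be $1+\rho_{\text{W},\sv}^{(\text{MRC})}$ rather than $\rho_{\text{W},\sv}^{(\text{MRC})}(1+\rho_{\text{W},\sv}^{(\text{MRC})})$, since $\tilde{\mathbf{y}}^H\tilde{\mathbf{y}}-\tilde{\mathbf{y}}^H(\mathbf{I}+\tilde{\mathbf{h}}\tilde{\mathbf{h}}^H)^{-1}\tilde{\mathbf{y}}=|\tilde{\mathbf{h}}^H\tilde{\mathbf{y}}|^2/(1+\rho_{\text{W},\sv}^{(\text{MRC})})$; your Step~3 form of the test is the correct one, so the conclusion is unaffected.
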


\subsection{Conventional DEP Metrics}

In this subsection, we introduce two conventional approaches for analyzing the DEP performance in simultaneous multi-modal transmission. 

In~\cite{Lin:23}, based on Pinsker's inequality, a lower bound of the DEP is given by
\begin{align} \label{eq:DEP_KL_bound}
    \tP_{\text{DEP},\sv} &\geq 1 - \sqrt{\frac{1}{2}\cD(\PP_{0} | \PP_{1})} \nonumber \\
    &= 1 - \sqrt{\frac{L}{2}  \sum_{m \in \sv}  \left[  \ln\left( 1 + \rho_{\text{W},m} \right)  - \frac{1}{1 + 1/\rho_{\text{W},m} }      \right]  } \nonumber \\
    &\triangleq \tP_{\text{DEP},\sv}^{\text{(LB)}},
\end{align}
where $\PP_{0}$ and $\PP_{1}$ are the distributions of the received signals at Willie under $\cH_0$ and $\cH_1$, respectively.


Also, in~\cite{Aggarwal:24}, for the case where $M_{\sv} = 2$, an approximation of the DEP using an iterative detection threshold optimization was studied.  
We generalize this work to scenarios with any number of selected modalities $M_{\sv}$.
Assume Willie employs $M_{\sv}$ detection thresholds, $\delta_1, \dots \delta_{m_{M_{\sv}}}$.
For each modality, the false alarm probability and missed detection probability are given by $\tP_{\text{FA},m} = 1 - \frac{1}{\Gamma\left(L\right)}\gamma\left( L, \frac{\delta_m}{  \Omega_m N_0  } \right)$ and $\tP_{\text{MD},m} =  \frac{1}{\Gamma\left(L\right)} \gamma\Big( L, \frac{\delta_m}{   P_m \left| g_{\text{W},m} \right|^2 + \Omega_m N_0  } \Big)$, respectively. 
A detection error occurs if there is at least one false alarm or if Willie fails to detect transmissions at all modalities.
Therefore, the DEP is defined by 
\begin{align} \label{eq:DEP_Iterative_def}
    \tP_{\text{DEP},\sv}^{\text{(iter)}}(\{ \delta_m\}) &= 1 - \prod_{m\in\sv} \left( 1 - \tP_{\text{FA},m} \right) + \prod_{m\in\sv} \tP_{\text{MD},m}.
\end{align}
By computing the partial derivative of $\tP_{\text{DEP},\sv}^{\text{(iter)}}$ with respect to $\delta_m$ and equating it to zero, we have
\begin{align} \label{eq:delta_Iterative}
    \delta_m &= \Omega_m N_0 L \bigg( 1 + \frac{1}{\rho_{\text{W},m} } \bigg)  \Bigg[ \ln\left( 1 + \rho_{\text{W},m} \right) \nonumber \\
    &\qquad+ \frac{1}{L} \ln\Bigg(  \frac{ \prod_{ \bar{m}\in\sv, \bar{m} \neq m} ( 1 - \tP_{\text{FA},\bar{m}} ) }{  \prod_{\bar{m}\in\sv, \bar{m} \neq m} \tP_{\text{MD},\bar{m}}   }        \Bigg) \Bigg].
\end{align}
Finally, we update $\delta_1, \dots \delta_{m_{M_{\sv}}}$ iteratively until $\tP_{\text{DEP},\sv}^{(\text{iter})}$  converges.
The corresponding approximation of the DEP can be obtained by substituting the converged thresholds into~\eqref{eq:DEP_Iterative_def}.

Numerical simulations in Section~\ref{sec:results} show that the lower bound $\tP_{\text{DEP},\sv}^{\text{(LB)}}$ is overly conservative, while $\tP_{\text{DEP},\sv}^{\text{(iter)}}$ overestimates the actual DEP despite its prohibitively high computational complexity.


\section{Detection under Modality Uncertainty}  \label{sec:unknown}

In Section~\ref{sec:analysis}, we focused on the worst-case scenario where Willie has full knowledge of the transmission parameters to optimize his detection performance. 
In this section, we consider a more practical scenario where Willie possesses no knowledge of the channels between Alice and Bob. 
Since Alice's objective is to maximize covertness subject to a minimum rate requirement at Bob, her selection of modalities is jointly determined by the channel conditions of both the Alice-Willie and Alice-Bob links.
Consequently, as Willie lacks access to the channels between Alice and Bob, he is fundamentally unable to infer the specific subset of modalities selected by Alice.

\subsection{Optimal Decision Rule under Uncertainty}

In contrast to the scenario in Section~\ref{sec:analysis} where the selected subset $\sv$ is known to Willie, here he must process the observations across all available modalities $m \in \mathcal{M}$, resulting in increased detection uncertainty.

\begin{theorem} \label{Theorem:optimal_detector_unknown}
When Willie is uninformed of the selected set of modalities $\sv$ and adopts a uniform prior distribution over all valid subsets, the optimal decision rule for Willie is given by
\begin{align} \label{eq:LRT_unknown_final}
    T \triangleq \eta \sum_{ \vv \subseteq \mathcal{M}, \vv \neq \emptyset} \exp\left(  T_{\vv} - \delta_{\vv} \right) \underset{\mathcal{D}_0}{\overset{\mathcal{D}_1}{\gtrless}}  1,
\end{align}
where $\eta \triangleq 1 / (2^M-1)$, $T_{\vv} =  \sum_{m \in \vv} w_m E_m$ is the weighted sum of energies across the modalities in a set $\vv \subseteq \cM$, and $\delta_{\vv} \triangleq L \sum_{m \in \vv} \ln \left(1 + \rho_{\text{W},m} \right)$.
\end{theorem}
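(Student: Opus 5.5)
The plan is to treat $\cH_1$ as a composite hypothesis whose unknown parameter is the active set $\vv$, and to marginalize that parameter out under the uniform prior. By Bayes' rule with equal priors on $\cH_0$ and $\cH_1$, the DEP-minimizing detector is the likelihood ratio test comparing $f(Y_{\text{W}}|\cH_1)$ with $f(Y_{\text{W}}|\cH_0)$ against threshold $1$. Here $Y_{\text{W}}$ now collects the observations on \emph{all} modalities in $\cM$, and
\begin{align*}
f(Y_{\text{W}}|\cH_1)=\eta\sum_{\vv\subseteq\cM,\vv\neq\emptyset} f(Y_{\text{W}}|\cH_1,\vv).
\end{align*}
The empty set is excluded because under $\cH_1$ at least one modality is used, which gives $\eta = 1/(2^M-1)$. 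Minimality of the error follows from the standard argument: $\PP(\cD_0|\cH_1)+\PP(\cD_1|\cH_0)$ is minimized by placing each observation in the region where the larger conditional density lives. This is the same reasoning used for Theorem~\ref{Theorem:optimal_detector_known_set}.

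Next I would write out the conditional likelihoods. Under $\cH_1$ with set $\vv$, the samples on $m\in\vv$ are $\cC\cN(0,\sigma_{1,m}^2)$ and those on $m\notin\vv$ are $\cC\cN(0,\sigma_{0,m}^2)$, independent across $m$ and $l$. Dividing by $f(Y_{\text{W}}|\cH_0)$ (the product over all $m\in\cM$ with variances $\sigma_{0,m}^2$), the factors for $m\notin\vv$ cancel exactly. This leaves
\begin{align*}
\frac{f(Y_{\text{W}}|\cH_1,\vv)}{f(Y_{\text{W}}|\cH_0)}=\prod_{m\in\vv}\Big(\frac{\sigma_{0,m}^2}{\sigma_{1,m}^2}\Big)^{L}\exp\Big(\Big(\frac{1}{\sigma_{0,m}^2}-\frac{1}{\sigma_{1,m}^2}\Big)E_m\Big).
\end{align*}

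Then I would simplify using $\sigma_{1,m}^2=\sigma_{0,m}^2(1+\rho_{\text{W},m})$. This gives $\sigma_{0,m}^2/\sigma_{1,m}^2=(1+\rho_{\text{W},m})^{-1}$, and
\begin{align*}
\frac{1}{\sigma_{0,m}^2}-\frac{1}{\sigma_{1,m}^2}=\frac{\rho_{\text{W},m}}{\Omega_mN_0(1+\rho_{\text{W},m})}=w_m.
\end{align*}
So each term equals $\exp(\sum_{m\in\vv}w_mE_m - L\sum_{m\in\vv}\ln(1+\rho_{\text{W},m}))=\exp(T_{\vv}-\delta_{\vv})$. This reuses the algebra from Appendix~\ref{appendix:optimal_detector_known_set}. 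Summing over $\vv$ and multiplying by $\eta$ gives $T$, and comparing $T$ with $1$ yields \eqref{eq:LRT_unknown_final}.

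The only delicate point is the first step: arguing that, with $\vv$ unknown, the Bayes-optimal rule uses the prior-averaged likelihood. That is, the relevant DEP is averaged over $\vv$ with uniform weights, and minimizing this averaged error pointwise leads to the mixture LRT. Unlike the known-set case, the test cannot be reduced further to a linear statistic, because the log of a sum of exponentials does not simplify. The statement therefore stops at the sum form.
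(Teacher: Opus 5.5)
Your proposal is correct and follows the same route as the paper's proof in Appendix~\ref{appendix:optimal_detector_unknown}. You write $f(Y_{\text{W}}|\cH_1)$ as a uniform mixture over the $2^M-1$ non-empty subsets, cancel the $m\notin\vv$ factors in each term of the likelihood ratio, and use $\sigma_{1,m}^2=\sigma_{0,m}^2(1+\rho_{\text{W},m})$ to identify each term as $\exp(T_{\vv}-\delta_{\vv})$ before comparing with threshold $1$. You also add a clarification the paper leaves implicit: the rule is Bayes-optimal for the DEP averaged over $\vv$ under the uniform prior, not for each fixed $\sv$.
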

\begin{proof}
See Appendix~\ref{appendix:optimal_detector_unknown}.  
\end{proof}

When the selected set of modalities $\sv$ is known to Willie, the optimal detector in~\eqref{eq:decision_criteria} accumulates energy linearly and exclusively over the active modalities. In contrast, the optimal detector in~\eqref{eq:LRT_unknown_final} computes the arithmetic mean of the likelihood ratios for every potential subset configuration $\vv \subseteq \cM$. This operation forces the detector to integrate noise from the inactive modalities, thereby diluting the accumulated signal energy and degrading detection performance relative to the scenario in~\eqref{eq:decision_criteria}.

\subsection{DEP Analysis in Low-SNR Regime}

If Willie is aware of the selected subset $\sv$ as in Section~\ref{sec:analysis}, the test statistic $T_{\sv}$ in~\eqref{eq:T_def} is a linear sum of the energies across modalities, which is analytically tractable. 
On the other hand, when $\sv$ is unknown to Willie, the test statistic $T$ in~\eqref{eq:LRT_unknown_final} becomes a sum of exponentials of the weighted energy sums $T_{\vv} =  \sum_{m \in \vv} w_mE_m$.  
Therefore, deriving the CF of this test statistic and the corresponding DEP in closed form is mathematically intractable.

In covert communications, since Alice aims to conceal the existence of her transmissions, the received signal power at Willie must be maintained at a very low level. 
Motivated by this operational requirement, the following theorem derives an expression for the DEP in the low-SNR regime at Willie.

\begin{theorem} \label{Theorem:lowSNR_DEP}
When Willie is unaware of the set of active modalities $\sv$, the DEP in the low-SNR regime ($|\rho_{\text{W},m}| \ll 1$) is written as
\begin{align} \label{eq:DEP_unknown}
    \tP_{\text{DEP},\sv} =  1 - \frac{1}{\pi} \int_0^\infty \! \frac{1}{t} \Im \left[ e^{-it \tilde{\delta}} \left(  \Phi_{\tilde{T} | \cH_1 }(t) - \Phi_{\tilde{T} | \cH_0 }(t) \right) \right] \d t,
\end{align}
where $\tilde{T} = \sum_{m \in \cM} w_mE_m$ denotes the test statistic and $\tilde{\delta} =  L \sum_{m \in \cM} \ln \left(1 + \rho_{\text{W},m} \right)$ is the detection threshold in the low-SNR regime. 
Here, the CFs of $\tilde{T}$ under two hypotheses are represented as 
\begin{align} \label{eq:CF_of_T_H0}
	\Phi_{\tilde{T} | \cH_0 }(t) &= \prod_{m \in \cM} \phi_{0,m}(t),   \\ \label{eq:CF_of_T_H1}
    \Phi_{\tilde{T} | \cH_1 }(t) &= \left( \prod_{m \in \sv} \phi_{1,m}(t) \right) \left( \prod_{m \notin \sv} \phi_{0,m}(t) \right),
\end{align}
where $\phi_{k,m}(t)$ for $k \in \{0,1\}$ is defined in~\eqref{eq:phi_def}. 
\end{theorem}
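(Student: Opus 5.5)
The plan is to show that, to first order in the SNRs $\rho_{\text{W},m}$, the detector of Theorem~\ref{Theorem:optimal_detector_unknown} becomes a linear threshold test on $\tilde{T}$. The DEP then follows from Gil-Pelaez inversion, as in Theorem~\ref{Theorem:exact_DEP}.

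\emph{Step 1: factor the mixture statistic.} Every nonempty subset $\vv\subseteq\cM$ contributes $\exp(T_{\vv}-\delta_{\vv})=\prod_{m\in\vv}e^{w_mE_m-L\ln(1+\rho_{\text{W},m})}$. Summing over all subsets gives
\begin{align*}
T=\eta\Big[\prod_{m\in\cM}\big(1+e^{w_mE_m}(1+\rho_{\text{W},m})^{-L}\big)-1\Big].
\end{align*}
The test $T\gtrless 1$ is therefore equivalent to $\sum_{m\in\cM}\ln\big(1+e^{w_mE_m-L\ln(1+\rho_{\text{W},m})}\big)\gtrless\ln 2^M$.

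\emph{Step 2: linearize in the low-SNR regime.} When $|\rho_{\text{W},m}|\ll1$, both $w_m\sigma_{k,m}^2=\mathcal{O}(\rho_{\text{W},m})$ and $L\ln(1+\rho_{\text{W},m})=\mathcal{O}(\rho_{\text{W},m})$. Under either hypothesis, $E_m/\sigma_{k,m}^2$ concentrates near $L$, so the exponent $x_m\triangleq w_mE_m-L\ln(1+\rho_{\text{W},m})$ is small. Expanding $\ln(1+e^{x})=\ln2+x/2+\mathcal{O}(x^2)$ and keeping first-order terms turns the rule into $\frac12\sum_m x_m\gtrless0$, that is, $\tilde T\gtrless\tilde\delta$.

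I expect this step to be the main obstacle. It requires arguing that the neglected quadratic terms are of smaller order than the first-order fluctuations that determine the decision. I would try first to note that $\mathrm{Var}(x_m)=\mathcal{O}(L\rho_{\text{W},m}^2)$, which is of the same order as $x_m^2$. Since the paper states the theorem as an approximation that holds in the regime $|\rho_{\text{W},m}|\ll1$, I would present the linearization as the leading-order detector and not attempt a uniform error bound.

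\emph{Step 3: compute the DEP.} With the linear rule, $\tP_{\text{DEP},\sv}=\PP(\tilde T<\tilde\delta|\cH_1)+1-\PP(\tilde T<\tilde\delta|\cH_0)$. The energies $E_m$ are independent across modalities. Each $E_m/\sigma^2$ is a sum of $L$ i.i.d.\ exponentials, so $w_mE_m$ is gamma distributed with CF $\phi_{k,m}(t)=(1-itw_m\sigma_{k,m}^2)^{-L}$, which is presumably the quantity defined in~\eqref{eq:phi_def}.

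The variances follow from the signal model. Under $\cH_0$ every modality has variance $\sigma_{0,m}^2$. Under $\cH_1$, modalities in $\sv$ have variance $\sigma_{1,m}^2$ and those outside $\sv$ keep $\sigma_{0,m}^2$. The product forms~\eqref{eq:CF_of_T_H0} and~\eqref{eq:CF_of_T_H1} follow from independence. Applying the Gil-Pelaez formula $F(x)=\frac12-\frac1\pi\int_0^\infty\frac1t\Im[e^{-itx}\Phi(t)]\d t$ to both CDFs, exactly as in the proof of Theorem~\ref{Theorem:exact_DEP}, gives~\eqref{eq:DEP_unknown}.
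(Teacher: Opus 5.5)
Your proposal is correct and reaches the same linear detector $\tilde{T}\gtrless\tilde{\delta}$, the same product characteristic functions, and the same Gil-Pelaez step as the paper. The difference is how you organize the linearization.

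The paper expands each of the $2^M-1$ mixture terms, $\exp(T_{\vv}-\delta_{\vv})\approx 1+T_{\vv}-\delta_{\vv}$. It then uses the fact that each modality lies in $2^{M-1}$ nonempty subsets to collapse the sum to $1+2^{M-1}\eta\sum_{m\in\cM}x_m$. You instead first apply the exact identity $\sum_{\vv\subseteq\cM}\prod_{m\in\vv}a_m=\prod_{m\in\cM}(1+a_m)$. This rewrites the optimal rule as $\sum_{m\in\cM}\ln(1+e^{x_m})\gtrless M\ln 2$, and only then do you expand. Expanding your product to first order reproduces the paper's expression exactly, so the two routes agree.

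Your route gives three things the paper's does not. First, the smallness assumption is needed only per modality ($|x_m|\ll 1$), not for every subset sum $\sum_{m\in\vv}x_m$. Second, the neglected term is explicit, $\sum_m x_m^2/8$. Since each $\ln(1+e^{x})$ is monotone, it cannot change the decision when $M=1$, so the error is purely a curvature of the decision boundary across modalities. Third, the exact optimal statistic becomes a sum of independent per-modality terms. Its characteristic function is therefore a product of one-dimensional integrals $\EE[(1+e^{x_m})^{it}]$ against gamma densities. Your Step~1 thus gives a numerically computable exact DEP under modality uncertainty via the same Gil-Pelaez formula, whereas the paper calls this intractable and only simulates it.

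The paper's route is shorter and stays in the likelihood-mixture form. Your worry about controlling the quadratic terms is legitimate, but the paper does not address it either; its appendix states the result with $\approx$. Presenting Step~2 as a leading-order approximation matches the paper's level of rigor.
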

\begin{proof}
See Appendix~\ref{appendix:lowSNR_DEP}.  
\end{proof}

The CFs in~\eqref{eq:CF_of_T_H0} and~\eqref{eq:CF_of_T_H1} highlight the impact of the ratio of the total available modalities $M$ to the number of active modalities $M_{\sv}$ on the DEP.
When $\sv$ is unknown, Willie is compelled to integrate energy over all $M$ available modalities to avoid missing the signal, thereby unavoidably collecting noise from the empty frequency bands.
This accumulated noise increases the variance of the test statistic $\tilde{T}$, effectively obscuring the signal energy present in the active subset $\sv$.

\subsection{Low-Complexity Approximation}

Since the representation of the DEP in~\eqref{eq:DEP_unknown} incurs high computational complexity, in the following lemma, we introduce a low-complexity approximation of the DEP based on the two-moment matching method.

\begin{lemma} \label{Lemma:approx_DEP_unknown}
An approximation of the DEP $\tP_{\text{DEP},\sv}$ in~\eqref{eq:DEP_unknown} in the low-SNR regime is identified as 
\begin{align} \label{eq:DEP_moment_matching_unknown}
    \tP_{\text{DEP},\sv} &\approx  1 + \frac{\gamma\left( \tilde{\alpha}_{1}, \frac{\tilde{\delta}}{ \tilde{\theta}_{1}} \right)}{\Gamma\left(\tilde{\alpha}_{1}\right)} - \frac{\gamma\left( \tilde{\alpha}_{0}, \frac{ \tilde{\delta}}{\tilde{\theta}_{0}} \right)}{\Gamma\left(\tilde{\alpha}_{0}\right)},
\end{align}
where $\tilde{\alpha}_{k}$ and $\tilde{\theta}_{k}$ are defined in Appendix~\ref{appendix:approx_DEP_unknown}.
\end{lemma}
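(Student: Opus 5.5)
Under the low-SNR reduction of Theorem~\ref{Theorem:lowSNR_DEP}, Willie compares $\tilde{T}=\sum_{m\in\cM} w_m E_m$ with $\tilde{\delta}$. The test statistic is a weighted sum of independent energies, so I will repeat the argument of Lemma~\ref{Lemma:approx_DEP} with $\cM$ in place of $\sv$. The only new ingredient is that, under $\cH_1$, the variance attached to each modality depends on whether $m\in\sv$.

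First, I write the DEP as
\begin{align*}
\tP_{\text{DEP},\sv} = 1 + \PP(\tilde{T}<\tilde{\delta}\,|\,\cH_1) - \PP(\tilde{T}<\tilde{\delta}\,|\,\cH_0),
\end{align*}
which follows directly from \eqref{eq:DEP_def_1}. Under $\cH_k$, each $E_m$ is a sum of $L$ i.i.d. exponential variables with mean $\sigma^2_{k,m}$. Hence $w_m E_m\sim\cG(L, w_m\sigma^2_{k,m})$, and the terms are independent across $m$.

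Second, I compute the first two moments of $\tilde{T}$. Define the effective per-modality scale $s_{k,m}$ by $s_{0,m}=w_m\sigma^2_{0,m}$ for every $m\in\cM$. Under $\cH_1$, set $s_{1,m}=w_m\sigma^2_{1,m}$ for $m\in\sv$ and $s_{1,m}=w_m\sigma^2_{0,m}$ for $m\notin\sv$. This matches the structure of the CFs in \eqref{eq:CF_of_T_H0}--\eqref{eq:CF_of_T_H1}. Then
\begin{align*}
\EE[\tilde{T}\,|\,\cH_k]=L\sum_{m\in\cM}s_{k,m}, \qquad \mathrm{Var}[\tilde{T}\,|\,\cH_k]=L\sum_{m\in\cM}s_{k,m}^2.
\end{align*}
These moments could equally be read off by differentiating the logarithm of the CFs \eqref{eq:CF_of_T_H0}--\eqref{eq:CF_of_T_H1} at $t=0$.

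Third, I match $\tilde{T}|\cH_k$ to a gamma variable $\cG(\tilde{\alpha}_k,\tilde{\theta}_k)$ by equating mean $\tilde{\alpha}_k\tilde{\theta}_k$ and variance $\tilde{\alpha}_k\tilde{\theta}_k^2$. This gives
\begin{align*}
\tilde{\alpha}_k=\frac{L\left(\sum_{m\in\cM}s_{k,m}\right)^2}{\sum_{m\in\cM}s_{k,m}^2}, \qquad \tilde{\theta}_k=\frac{\sum_{m\in\cM}s_{k,m}^2}{\sum_{m\in\cM}s_{k,m}}.
\end{align*}
Finally, I replace each CDF by the regularized lower incomplete gamma function $\gamma(\tilde{\alpha}_k,\tilde{\delta}/\tilde{\theta}_k)/\Gamma(\tilde{\alpha}_k)$. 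Substituting into the expression from the first step yields \eqref{eq:DEP_moment_matching_unknown}.

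I do not expect any step to be a real obstacle, since the result is an approximation and not a bound. The one point that needs care is the $\cH_1$ mixture: modalities outside $\sv$ still contribute noise-only scales weighted by $w_m$. This is exactly the noise-dilution effect, and it must be carried consistently into both $\tilde{\alpha}_1$ and $\tilde{\theta}_1$.

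A useful consistency check is that the approximation is exact when all $s_{k,m}$ are equal, because a sum of i.i.d. gamma variables with a common scale is again gamma.

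Optionally, I can simplify using $w_m\sigma^2_{0,m}=\rho_{\text{W},m}/(1+\rho_{\text{W},m})$ and $w_m\sigma^2_{1,m}=\rho_{\text{W},m}$. With these, the parameters depend only on the SNRs at Willie.
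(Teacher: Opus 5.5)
Your proposal is correct and follows the paper's proof in Appendix~\ref{appendix:approx_DEP_unknown} essentially step for step. The paper likewise computes the mean and variance of $\tilde{T}$ under each hypothesis, with the modalities outside $\sv$ contributing noise-only terms under $\cH_1$. It then matches these to a gamma distribution with $\tilde{\alpha}_k=\tilde{\mu}_k^2/\tilde{\sigma}_k^2$ and $\tilde{\theta}_k=\tilde{\sigma}_k^2/\tilde{\mu}_k$, which agree with your expressions. Finally, it substitutes the regularized incomplete gamma CDFs into $1+\PP(\tilde{T}<\tilde{\delta}\,|\,\cH_1)-\PP(\tilde{T}<\tilde{\delta}\,|\,\cH_0)$.
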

\begin{proof}
See Appendix~\ref{appendix:approx_DEP_unknown}.  
\end{proof}

The approximation in~\eqref{eq:DEP_moment_matching_unknown} explicitly quantifies the noise aggregation penalty via the effective shape ($\tilde{\alpha}_{k}$) and scale ($\tilde{\theta}_{k}$) parameters, allowing for rapid DEP evaluation without requiring numerical integration.

\subsection{Lower Bound of the DEP}

When the set of selected modalities $\sv$ is unknown to Willie, he monitors all $M$ modalities. 
Consequently, the total divergence $\cD(\PP_{0} | \PP_{1})$ in~\eqref{eq:DEP_KL_bound} becomes the sum of divergences across all $M$ modalities: $\cD(\PP_{0} | \PP_{1}) = \sum_{m\in\sv} \cD\left( \PP_{0,m} | \PP_{1,m} \right) + \sum_{m\notin\sv} \cD\left( \PP_{0,m} | \PP_{1,m} \right)$.
However, for $m\notin \sv$, $\PP_{0,m}$ is identical to $\PP_{1,m}$, implying that the second summation term is zero. 
Therefore, the lower bound of the DEP $\tP_{\text{DEP},\sv}^{\text{(LB)}}$ remains identical to the expression derived in~\eqref{eq:DEP_KL_bound}.


\section{Modality Set Selection for DEP Maximization} \label{sec:optimization}

In this section, we focus on the modality set selection problem in~\eqref{eq:problem}, which aims to maximize the DEP under a rate requirement at Bob. 
The optimization is a combinatorial problem; since the search space size of $2^M - 1$ grows exponentially with the number of modalities, finding the global optimum via exhaustive search is computationally prohibitive for large $M$. 
We propose a low-complexity subset selection strategy in Section~\ref{subsec:proposed} for the case where Alice knows the instantaneous CSI to Willie, and subsequently consider the scenarios where Alice possesses only statistical CSI or no CSI regarding the channels to Willie in Sections~\ref{subsec:statisticalCSI} and~\ref{subsec:noCSI}, respectively.

\subsection{Proposed algorithm} \label{subsec:proposed}

Based on the detector analysis, the test statistic is a linear combination of weighted energies for the case where the modalities are known ($T_{\sv} = \sum_{m \in \sv} w_m E_m$) and the case where modalities are unknown in the low-SNR regime ($\tilde{T} = \sum_{m \in \cM} w_mE_m$). 
Consequently, the statistical distance between the null and alternate hypotheses is primarily determined by the aggregate shift in the mean of the test statistic. 
In this sense, we employ the mean shift as a tractable metric to be minimized. 
However, this approach entails a loss of optimality since it does not directly optimize the exact DEP expression in~\eqref{eq:DEP_from_CF}, which is a non-linear function of the decision statistics.

We define $c_m$ as the cost of modality $m$, which quantifies its contribution to the shift in the mean of the test statistic:
\begin{align} \label{eq:cost_metric}
    c_m &= \mathbb{E}\left[w_mE_m \big| \mathcal{H}_1 \right] - \mathbb{E}\left[w_mE_m \big| \mathcal{H}_0\right]  \\
    &=  L w_m \left( \sigma_{1,m}^2 - \sigma_{0,m}^2 \right)  = L w_m P_m |g_{\text{W},m}|^2 =  \frac{ L \rho_{\text{W},m}^2}{  1 + \rho_{\text{W},m} }. \nonumber
\end{align}
Then, we reformulate the modality selection problem as the following mean shift minimization problem:
\begin{align} \label{eq:problem_min_shift}
	&\underset{ \sv \subseteq \cM }{\min} ~~ \sum_{m \in \sv} c_m  \\ \label{eq:rate_constraint_repeat}
	&\quad \text{s.t.} ~ \sum_{m \in \sv} U_m \geq U_{\text{target}}.
\end{align} 
This formulation corresponds to the min-Knapsack problem (MKP), where the objective is to satisfy the rate constraint $U_{\text{target}}$ while minimizing the aggregate mean shift at Willie.

The optimization in~\eqref{eq:problem_min_shift} is known to be an NP-hard problem. 
Therefore, we propose a greedy approach based on the rate-to-cost efficiency metric, defined as
\begin{align} \label{eq:efficiency_metric}
    \Psi_m \triangleq \frac{U_m}{c_m} =  \frac{ U_m \left( 1 + \rho_{\text{W},m} \right) }{   L \left( \rho_{\text{W},m} \right)^2}.
\end{align}
Here, $\Psi_m$ quantifies the rate gain per unit of covertness cost. 
Since our goal is to accumulate rate to meet the requirement $U_{\text{target}}$ with the minimum loss of covertness, we prioritize modalities that offer the largest efficiency $\Psi_m$.
To solve the problem in~\eqref{eq:problem_min_shift}, we sort the modalities in descending order of $\Psi_m$ and sequentially add them to the set $\sv$ until the rate constraint $U_{\text{target}}$ is satisfied. 
Finally, to mitigate the potential sub-optimality of the greedy approach, we employ a refinement step that iteratively swaps a selected modality with an unselected one if the swap reduces the total covertness cost while maintaining the target rate requirement.
The proposed algorithm is summarized in Algorithm~\ref{Algorithm:proposed}.

\begin{algorithm} 
\caption{Proposed Modality Set Selection Algorithm }
\label{Algorithm:proposed}
\begin{algorithmic}[1] 
\State \textbf{Input:} $\mathcal{M}$, $U_{\text{target}}$
\State \textbf{Initialize:} $\sv \gets \emptyset$, $U_{\sv} \gets 0$

\State \textbf{Step 1: Metric Calculation}
\State Calculate the metric $\Psi_m$ in~\eqref{eq:efficiency_metric} for all $m \in \mathcal{M}$

\State \textbf{Step 2: Sorting}
\State Sort modalities in descending order of $\Psi_m$ such that indices $\hat{m}_1, \hat{m}_2, \dots, \hat{m}_M$ satisfy $\Psi_{\hat{m}_1} \ge \Psi_{\hat{m}_2} \ge \dots \ge \Psi_{\hat{m}_M}$

\State \textbf{Step 3: Greedy Selection}
\For{$i = 1$ to $M$}
    \State $U_{\sv} \gets U_{\sv} + U_{\hat{m}_i}$
    \State $\sv \gets \sv \cup \{\hat{m}_i\}$
    \If{$U_{\sv} \ge U_{\text{target}}$}
        \State \textbf{Break} 
    \EndIf
\EndFor

\State \textbf{Step 4: Refinement}
\State Define unselected set $\mathcal{R} \gets \mathcal{M} \setminus \sv$
\For{$m_{\text{out}} \in \sv$}
    \For{$m_{\text{in}} \in \mathcal{R}$}
        \State $U_{\text{new}} \gets U_{\sv} - U_{m_{\text{out}}} + U_{m_{\text{in}}}$
        \State $\Delta c \gets c_{m_{\text{in}}} - c_{m_{\text{out}}}$        
        \If{$U_{\text{new}} \ge U_{\text{target}}$ \textbf{and} $\Delta c < 0$}
            \State $\sv \gets (\sv \setminus \{m_{\text{out}}\}) \cup \{m_{\text{in}}\}$
            \State $U_{\sv} \gets U_{\text{new}}$
            \State $\mathcal{R} \gets (\mathcal{R} \setminus \{m_{\text{in}}\}) \cup \{m_{\text{out}}\}$
        \EndIf
    \EndFor
\EndFor

\If{$U_{\sv} \geq U_{\text{target}}$}
    \State \textbf{Output:} Optimized modality subset $\sv$
\Else
    \State \textbf{Output:} Infeasible
\EndIf

\end{algorithmic}
\end{algorithm}

The computational complexity of the proposed algorithm consists of the sorting operation, which requires a complexity of $\mathcal{O}(M \log M)$, and the refinement step, which incurs a complexity of $\mathcal{O}(|\sv| \cdot |\mathcal{R}|) \le \mathcal{O}(M^2)$, where $\mathcal{R}=\mathcal{M}\backslash \sv$, due to the pairwise inspection of selected and unselected modalities.
Consequently, the overall complexity is bounded by $\mathcal{O}(M^2)$.
Even with this quadratic bound, the complexity remains significantly lower than the exponential complexity $\mathcal{O}(2^M)$ required for the optimal exhaustive search.
Furthermore, in contrast to the exhaustive search, which requires evaluating the complex exact DEP expressions for every candidate subset, the proposed technique relies solely on the computation of the closed-form metric in~\eqref{eq:efficiency_metric}, significantly reducing the computational burden.

\subsection{Modality Set Selection with Statistical CSI} \label{subsec:statisticalCSI}

Let us denote $g_{\text{W},m} = \tilde{g}_{\text{W},m} \sqrt{ PL_m(d_{\text{W}})}$ where $\tilde{g}_{\text{W},m}$ is the small-scale fading of modality $m$, $d_{\text{W}}$ stands for the distance from Alice to Willie, and $PL_m(d)$ represents the large-scale path loss at distance $d$ for modality $m$. 
We consider the scenario where Alice does not know the instantaneous channel to Willie but possesses statistical CSI, specifically the large-scale path loss $\sqrt{ PL_m(d_{\text{W}})}$ and the statistics of the small-scale fading channels $\{ \tilde{g}_{\text{W},m} \}$.

Since the cost in~\eqref{eq:cost_metric} is $ c_m = L \frac{  \rho_{\text{W},m}^2}{  1 + \rho_{\text{W},m} } =  L \big(  \rho_{\text{W},m} - 1 + \frac{1}{1 + \rho_{\text{W},m}} \big)$, the expected cost $\bar{c}_m$ is given by
\begin{align} \label{eq:cost_CDI}
    \bar{c}_m =  L\bigg( \EE\bigg[\rho_{\text{W},m} \bigg] + \EE\bigg[ \frac{1}{1 + \rho_{\text{W},m}} \bigg] - 1  \bigg).
\end{align}
When $|\tilde{g}_{\text{W},m}|$ follows a Nakagami fading distribution with parameter $\kappa_m$, the SNR at Willie $\rho_{\text{W},m}$ follows a gamma distribution with shape $\kappa_m$ and scale $\bar{\rho}_{\text{W},m} / \kappa_m$ where $\bar{\rho}_{\text{W},m} \triangleq \EE \left[ \rho_{\text{W},m} \right] = \frac{P_m d_{\text{W}}^{-\epsilon}}{\Omega_m N_0}$.
Furthermore, we have
\begin{align} \label{eq:expected_term_2}
     &\EE\bigg[ \frac{1}{1 + \rho_{\text{W},m}} \bigg] = \int_0^\infty \frac{1}{1+x} f_{\rho_{\text{W},m}}(x) dx \nonumber\\
     & = \bigg(\frac{\kappa_m}{\bar{\rho}_{\text{W},m}}\bigg)^{\kappa_m} \int_0^\infty \frac{1}{1+x}   \frac{1}{\Gamma(\kappa_m)}  x^{\kappa_m-1} \exp\bigg(-\frac{\kappa_m x}{\bar{\rho}_{\text{W},m}}\bigg)     dx \nonumber \\
     & = \bigg(\frac{\kappa_m}{\bar{\rho}_{\text{W},m}}\bigg)^{\kappa_m} \cU\bigg(\kappa_m, \kappa_m, \frac{\kappa_m}{\bar{\rho}_{\text{W},m}}\bigg),
\end{align}
where $\cU(x,y,z) = \frac{1}{\Gamma(x)}\int_0^\infty e^{-zt} t^{x-1} (1+t)^{y-x-1} dt$ is the confluent hypergeometric function of the second kind. 
Consequently, the expected cost $\bar{c}_m$ in~\eqref{eq:cost_CDI} can be expressed as
\begin{align} \label{eq:cost_CDI_final}
    \bar{c}_m \!=\!  L\left( \bar{\rho}_{\text{W},m} \!+\! \bigg(\frac{\kappa_m}{\bar{\rho}_{\text{W},m}}\bigg)^{\kappa_m} \! \cU\bigg(\kappa_m, \kappa_m, \frac{\kappa_m}{\bar{\rho}_{\text{W},m}}\bigg) \!-\! 1  \right).
\end{align}

In the special case of Rayleigh fading ($\kappa_m = 1$), the expected cost simplifies to
 \begin{align} \label{eq:cost_CDI_final_Rayleigh}
    \bar{c}_m \!=\! L\bigg( \bar{\rho}_{\text{W},m} \!+\! \frac{1}{\bar{\rho}_{\text{W},m}} \exp\bigg( \frac{1}{\bar{\rho}_{\text{W},m}} \bigg) \text{E}_1\bigg( \frac{1}{\bar{\rho}_{\text{W},m}}\bigg) \!-\! 1  \bigg),
\end{align}
where we utilize the identity $\cU(1,1,z) = \exp(z)\text{E}_1(z)$ with $\text{E}_1(z) = \int_1^\infty \frac{\exp(-z t)}{t} \d t$ denoting the exponential integral.

Finally, to optimize modality selection under these statistical constraints, we define the efficiency metric $\Psi_m$ for Algorithm~\ref{Algorithm:proposed} as
\begin{equation} \label{eq:stat_metric}
    \Psi_m = \frac{U_m}{\bar{c}_m}.
\end{equation}

\subsection{Modality Set Selection without CSI} \label{subsec:noCSI}

In the scenario where Alice lacks both instantaneous and statistical CSI for the channels to Willie, she cannot explicitly evaluate the covertness cost. 
Since Willie's detection performance is fundamentally determined by the aggregate received energy, and Alice has no knowledge of the channel gains $|g_{\text{W},m}|^2$, a robust approach is to maximize energy efficiency while satisfying the rate requirement at Bob.
Accordingly, we define the energy efficiency metric for Algorithm~\ref{Algorithm:proposed} as
\begin{align} \label{eq:metric_no_csi}
    \Psi_m = \frac{U_m}{P_m}.
\end{align}
The proposed metric ensures that the rate constraint is met with the minimum possible total transmit power, offering a robust baseline for covertness in the absence of CSI.

\section{Simulation Results and Discussion} \label{sec:results}

In this section, we provide extensive numerical simulation results to validate our analysis and evaluate the efficiency of the proposed modality subset selection technique.
We consider 10 communication modalities $m_1,\dots, m_{10}$ where the center frequency of modality $m_i$ is given by $f_{c,i} = 300 \times i$~MHz. 
The channels from Alice to Willie and Bob in modality $m$ are modeled as $g_{\text{W},m} = \tilde{g}_{\text{W},m} \sqrt{ PL_m(d_{\text{W}})}$ and $g_{\text{B},m} = \tilde{g}_{\text{B},m} \sqrt{ PL_m(d_{\text{B}})}$, respectively, where $d_{\text{W}}$ and $d_{\text{B}}$ represent the distances from Alice to Willie and Bob, respectively.
Here, we employ the large-scale path loss model (in dB) $PL_m(d) = -10 n \log_{10}(d) - 20 \log_{10}(4\pi / \lambda_m)$~dB~\cite{Perez:02} where $n$ is a parameter that captures the effects of propagation mechanisms and $\lambda_m$ is the wavelength in meters. 
Furthermore, $\tilde{g}_{\text{W},m}$ and $\tilde{g}_{\text{B},m}$ denote the small-scale fading channels from Alice to Willie and Bob in modality $m$, respectively.

Unless otherwise stated, we adopt the following default parameters: $M=10$, $\cM = \{ m_1, \dots, m_{10}\}$, $\Omega_m = 10$~MHz, $\forall m$, $P_m = 10$~mW, $\forall m$, $N_0 = -120$~dBm, $L = 100$, $\tau = 0.15$, and $d_{\text{B}} = 30$~m.
We adopt the Rayleigh fading model; specifically, $\tilde{g}_{\text{W},m}$ and $\tilde{g}_{\text{B},m}$ are modeled as i.i.d. circularly symmetric complex Gaussian random variables with zero mean and unit variance, i.e., $\cC\cN(0,1)$. 
All analytical and simulated results are averaged over $10^4$ independent random channel realizations.

\begin{figure}[t]
	\centering
	\subfigure[The DEP at Willie]{  \hspace{-5mm}
		\includegraphics[width=.45\textwidth]{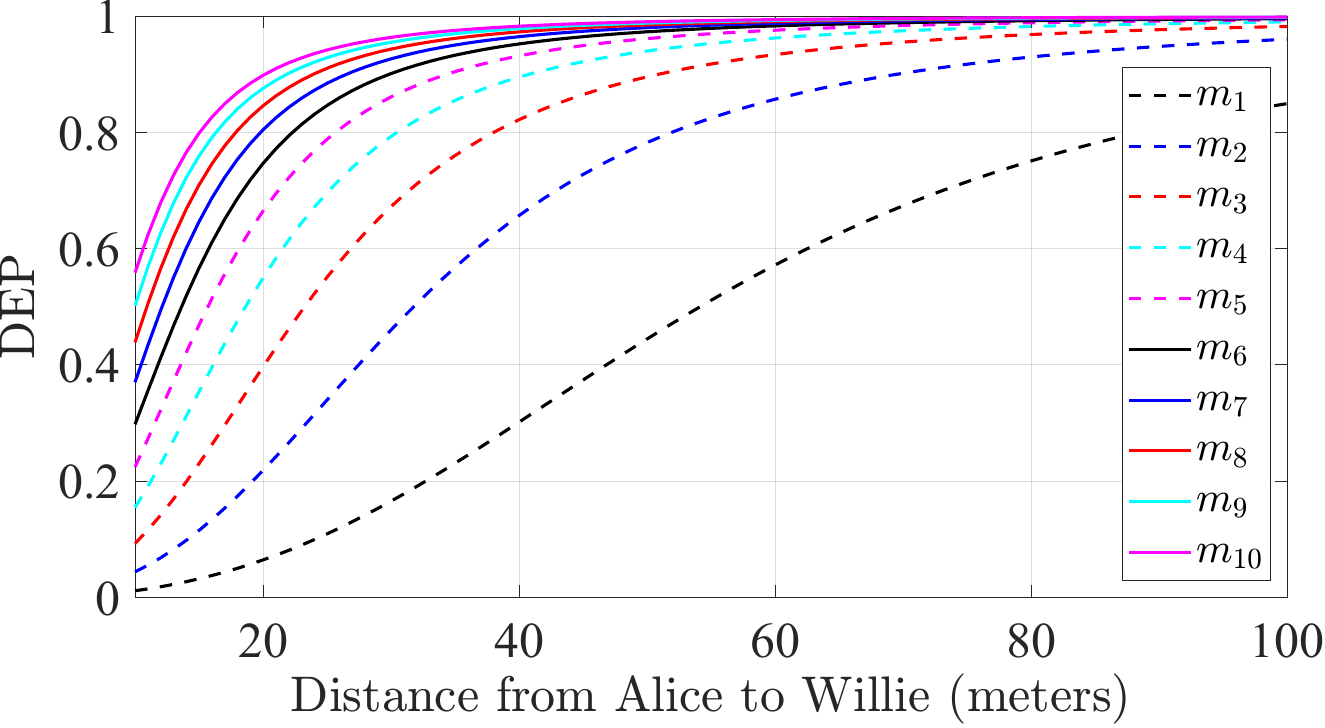} 
	}
	\vfill
	\subfigure[The rate at Bob]{ \hspace{-5mm}
		\includegraphics[width=.45\textwidth]{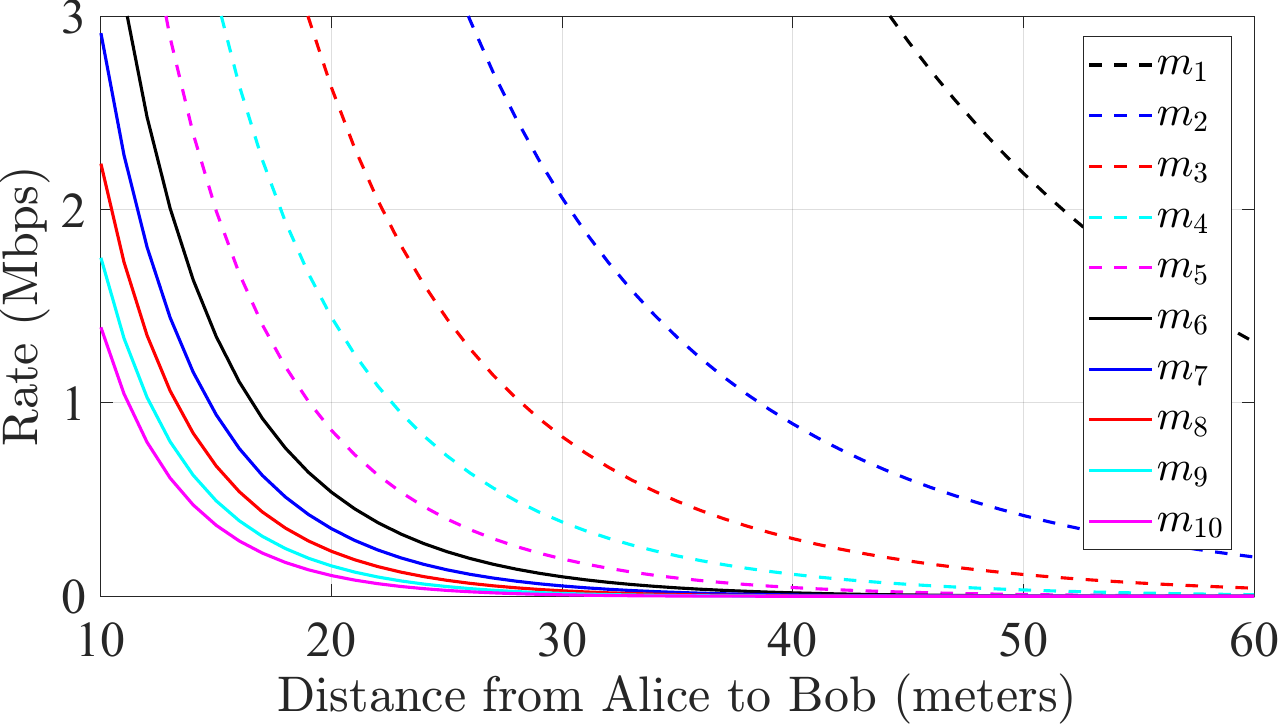}
	}
    \vspace{-2mm}
	\caption{Performance trade-off: DEP and rate for individual modalities.}
	\label{fig:each_modality}
\end{figure}

Fig.~\ref{fig:each_modality} demonstrates the DEP at Willie and the rate at Bob for each of the 10 available modalities. Since the center frequency of modality $m_i$ is $f_{c,i} = 300 i$~MHz, the channels from Alice to both Willie and Bob experience severe path loss as the index $i$ increases. 
Consequently, modalities with higher center frequencies yield higher DEP (improved covertness) but lower data rates. 
This highlights the fundamental trade-off in the selection process: modalities that are best for covertness contribute the least toward satisfying the rate requirement in an average sense.

\subsection{DEP when the active modalities are known to Willie}

\begin{figure}[t]
\begin{center} 
\includegraphics[width=3.3in]{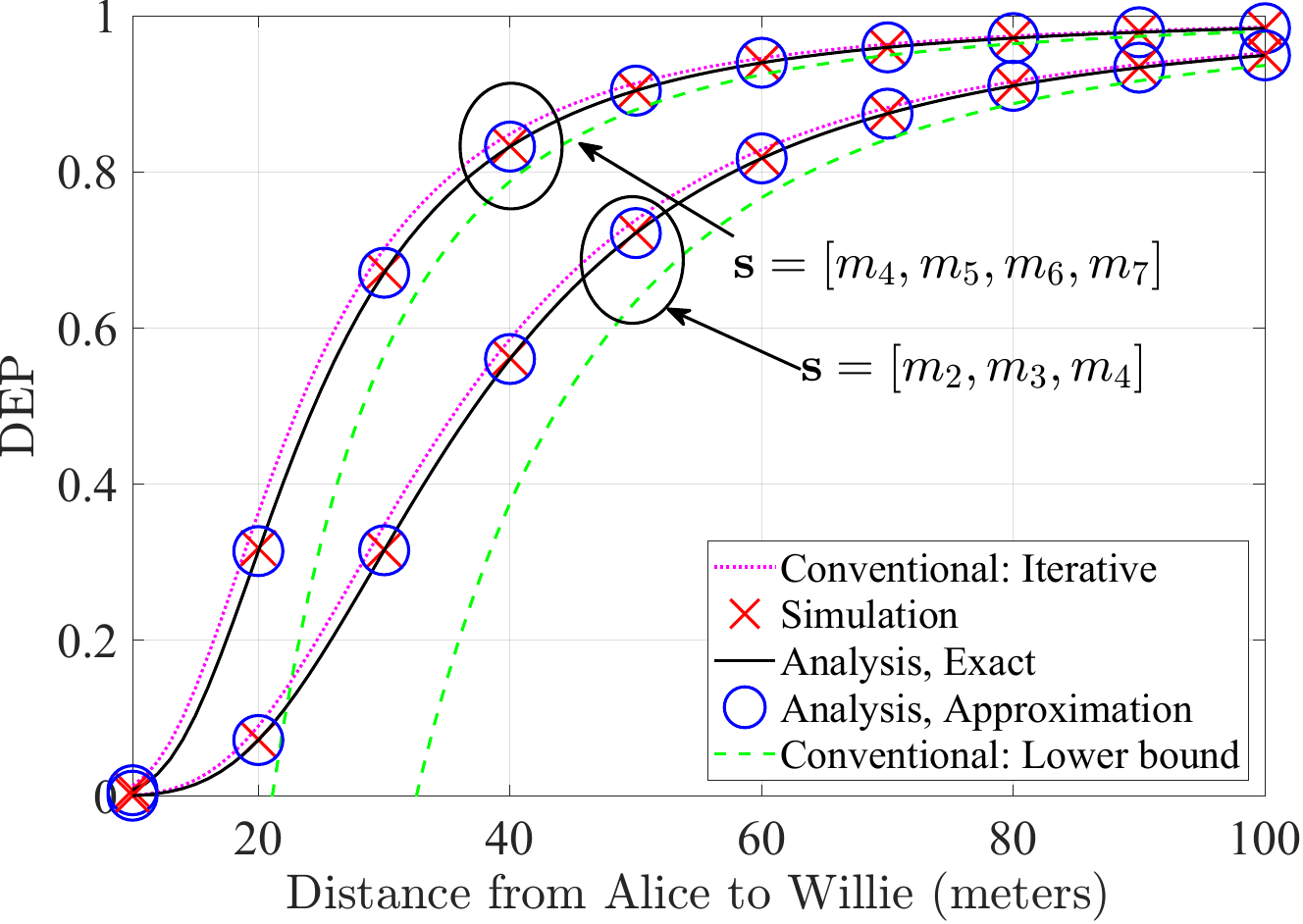}
\end{center} 
\vspace{-5.0mm}
\caption{Validation of the derived exact DEP expression and approximation.}
\label{figure:Known_1}
\end{figure}

In Fig.~\ref{figure:Known_1}, we validate the DEP analysis presented in Section~\ref{sec:analysis}. 
First, we observe that the exact analysis in~\eqref{eq:DEP_from_CF} accurately predicts the actual DEP, as evidenced by its perfect agreement with the Monte Carlo simulation results. 
Furthermore, the proposed approximation in~\eqref{eq:DEP_moment_matching} proves to be very tight, despite its significantly reduced computational complexity.
In contrast, the conventional approach using the iterative detection threshold optimization $\tP_{\text{DEP},\sv}^{\text{(iter)}}$ in~\eqref{eq:DEP_Iterative_def} overestimates the exact DEP. 
Therefore, relying on this metric for optimization may lead to insufficient actual covertness.
Conversely, the conventional lower bound $\tP_{\text{DEP},\sv}^{\text{(LB)}}$ in~\eqref{eq:DEP_KL_bound} exhibits a significant gap by underestimating the DEP. 
Hence, employing this lower bound as a metric would result in an overly conservative optimization, unnecessarily limiting transmission opportunities.

\begin{figure}[t]
\begin{center} 
\includegraphics[width=3.3in]{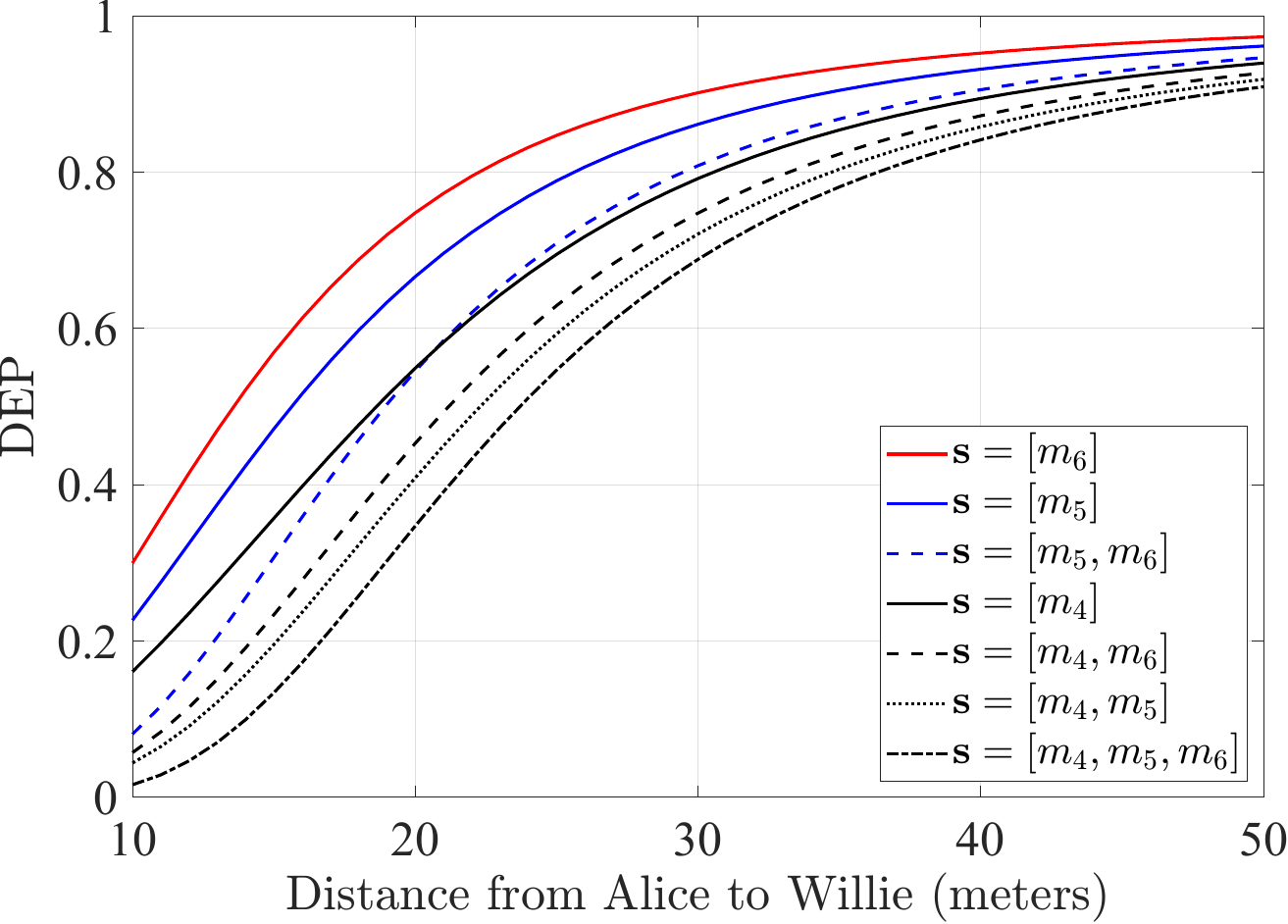}
\end{center} 
\vspace{-5.0mm}
\caption{DEP performance for various active modality subsets.}
\label{figure:Known_2}
\end{figure}

Fig.~\ref{figure:Known_2} presents the exact DEP performance for various modality subsets. 
Consistent with the results in Fig.~\ref{fig:each_modality}, single modalities with higher center frequencies exhibit higher DEP.
We observe that the achieved DEP decreases as we include additional modalities in the active set, since utilizing more channels increases the aggregate energy received at Willie. 
In this sense, the DEP is monotonically decreasing with respect to set inclusion (i.e., adding a modality to a set never increases the DEP).
However, the DEP is not a strictly decreasing function of the number of selected modalities. 
For instance, the set $\sv = \{m_5, m_6\}$ achieves a higher DEP than the single modality $\sv = \{m_4\}$ when the distance to Willie exceeds 20~m. 
This demonstrates that a strategic combination of multiple high-frequency modalities can be more covert than a single low-frequency modality.

\subsection{DEP when the active modalities are unknown to Willie}

\begin{figure}[t]
\begin{center} 
\includegraphics[width=3.3in]{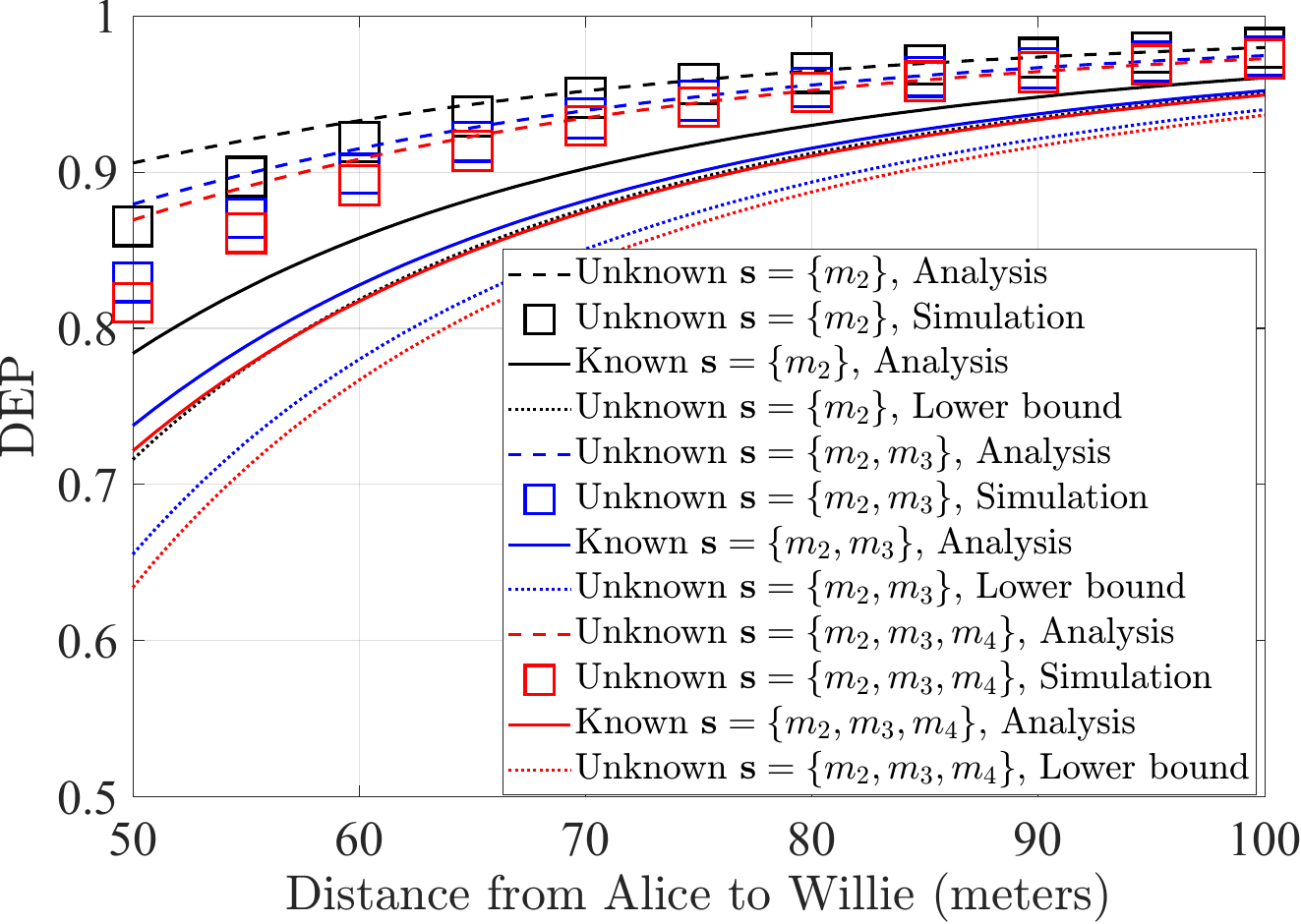}
\end{center} 
\vspace{-5.0mm}
\caption{Impact of modality uncertainty on the DEP.}
\label{figure:Unknown_1}
\end{figure}

In Fig.~\ref{figure:Unknown_1}, we compare the DEP performance under the scenario where Willie is aware of the active set of modalities with the scenario where he is unaware.
First, we observe that the DEP is significantly higher when Willie is not aware of the selected modalities, confirming that uncertainty regarding the transmission strategy improves covertness. 
Furthermore, the gap between the low-SNR regime analysis in~\eqref{eq:DEP_unknown} and the simulated exact DEP (obtained using the optimal detector in~\eqref{eq:LRT_unknown_final}) becomes negligible when the distance to Willie exceeds 70~m.
Since covert communications typically operate in the low-SNR regime, this confirms that our analysis is tight in the practical scenarios of interest. 
Finally, the proposed low-SNR analysis proves to be significantly tighter than the conventional lower bound across the entire range of distances.

\begin{figure}[t]
\begin{center} 
\includegraphics[width=3.3in]{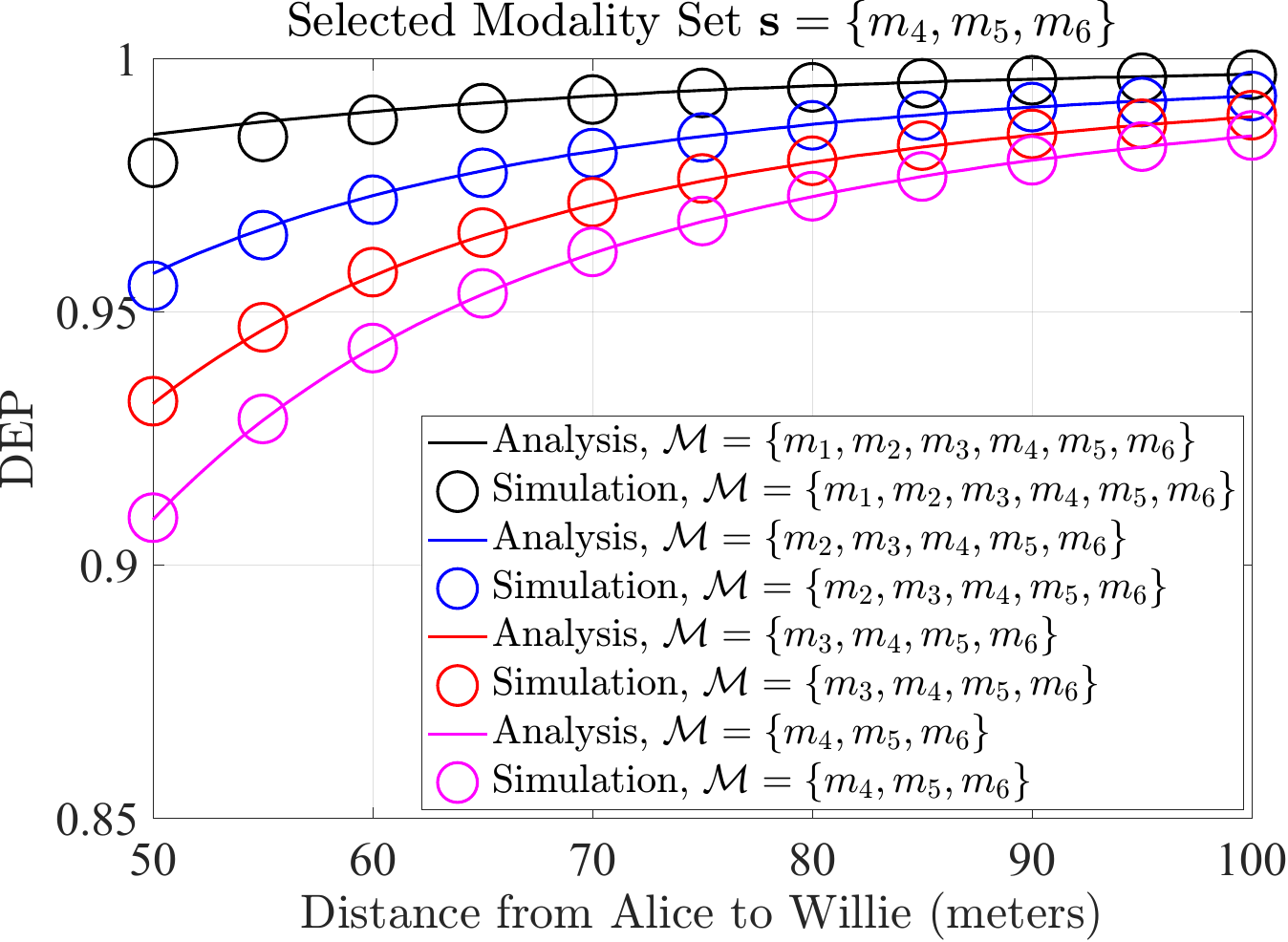}
\end{center} 
\vspace{-5.0mm}
\caption{Impact of the number of available modalities on the DEP when the selected modalities are unknown to Willie.}
\label{figure:Unknown_2}
\end{figure}

Fig.~\ref{figure:Unknown_2} illustrates the DEP performance for different sets of available modalities $\cM$, given a fixed active set $\sv = \{m_4, m_5, m_6\}$. 
First, the results confirm that the low-SNR regime analysis accurately predicts the actual DEP, although the approximation becomes slightly less tight as the number of available modalities increases.
Crucially, we observe that the DEP increases significantly when the selection is made from a larger pool of available modalities. 
For example, when $\cM= \{m_4, m_5, m_6\}$, there is no uncertainty regarding the active set, resulting in the lowest DEP. 
On the other hand, when $\cM= \{m_1,m_2,m_3,m_4, m_5, m_6\}$, Willie is forced to monitor all six frequency ranges. 
The noise collected from the inactive bands ($m_1, m_2, m_3$) degrades his detection performance, thereby leading to a higher DEP. 

\subsection{DEP performance with the modality selection optimization}

In this subsection, we evaluate the average DEP performance for $10^4$ channel realizations where the active modality set is optimized for each specific realization.
We compare the following selection strategies: 
\begin{itemize}

    \item \textit{Optimal:} This approach performs an exhaustive search over all possible non-empty subsets. It evaluates the exact DEP for every candidate subset using the optimal detectors in~\eqref{eq:decision_criteria} and~\eqref{eq:LRT_unknown_final}, and selects the one that maximizes DEP while satisfying the rate constraint. The complexity is $\mathcal{O}(2^M)$.

    \item \textit{Proposed:} The low-complexity strategy in Algorithm~\ref{Algorithm:proposed}, utilizing the efficiency metric $\Psi_m$. The complexity is bounded by $\mathcal{O}(M^2)$.
    
    \item \textit{MaxDEP-Greedy:} A baseline that calculates the DEP for each individual modality in~\eqref{eq:DEP_from_CF} or~\eqref{eq:DEP_unknown} depending on Willie's knowledge of $\sv$, sorts them in descending order, and sequentially adds them to the active set $\sv$ until the rate requirement is met. The complexity is $\mathcal{O}(M \log M)$.

    \item \textit{Random-Selection:} A naive baseline that randomly selects a modality and sequentially adds distinct random modalities to $\sv$ until the rate requirement is satisfied. The complexity is $\mathcal{O}(M)$.

\end{itemize}

\begin{figure}[t]
\begin{center} 
\includegraphics[width=3.3in]{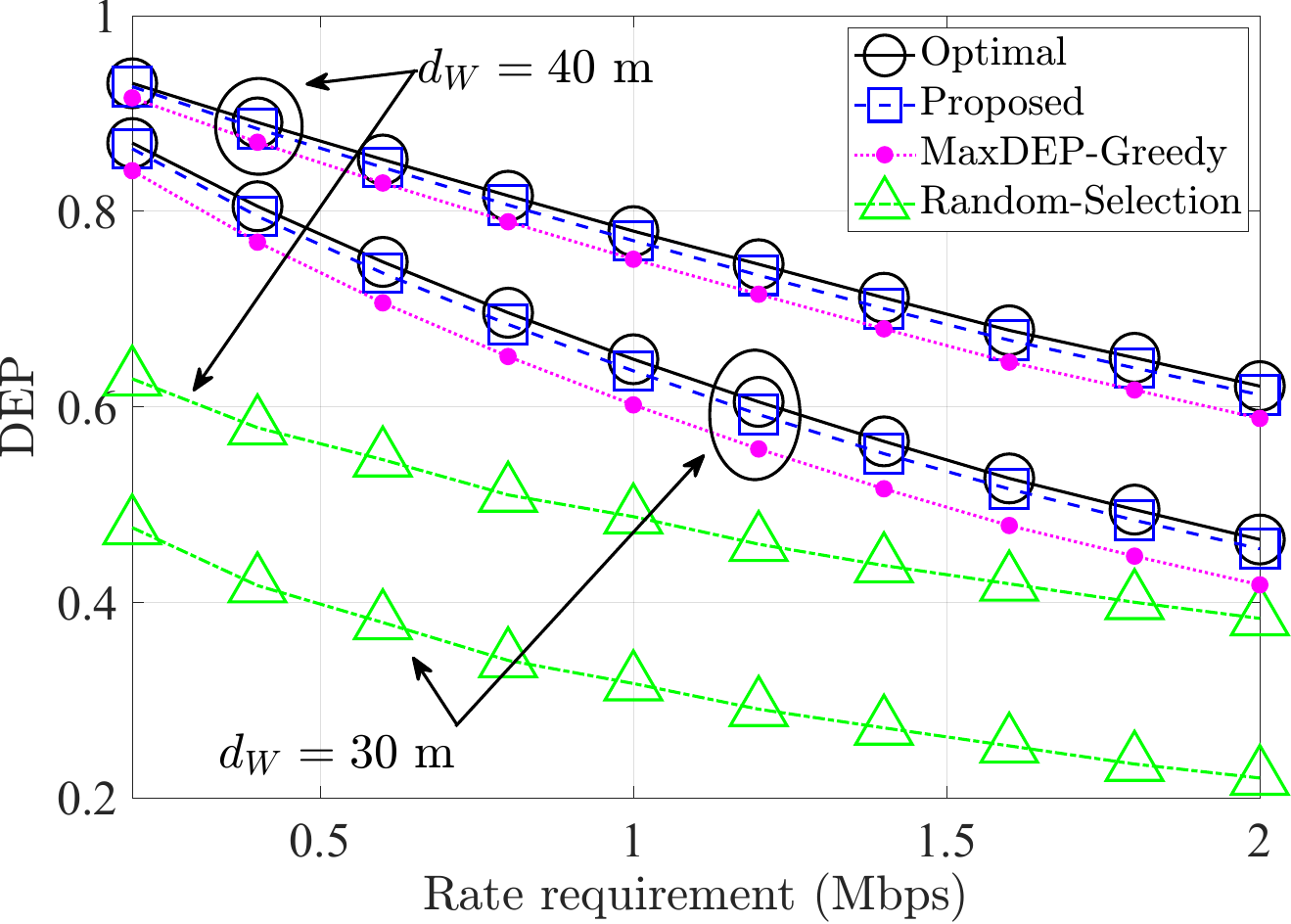}
\end{center} 
\vspace{-5.0mm}
\caption{Performance comparison of modality set selection strategies.}
\label{figure:optimization_1}
\end{figure}

In Fig.~\ref{figure:optimization_1}, we compare the DEP performance of the various modality set selection strategies as a function of the rate requirement at Bob, $U_{\text{target}}$. 
First, we observe that the proposed algorithm achieves near-optimal performance while significantly reducing computational complexity. 
Moreover, it consistently outperforms both the MaxDEP-Greedy and Random-Selection benchmarks.
We also observe that the achieved DEP decreases as the target rate increases. 
This occurs because satisfying a higher rate requirement necessitates the activation of additional modalities or those with lower center frequencies, which results in increased aggregate energy at Willie.
Lastly, as expected, the DEP increases with the distance to Willie $d_{\text{W}}$, since the received signal energy at Willie decays due to path loss.

Fig.~\ref{figure:optimization_2} demonstrates the impact of Alice's knowledge regarding the channels to Willie on the achieved DEP when $d_{\text{W}} = 40$~m. 
We evaluate the proposed algorithm using the three efficiency metrics derived in~\eqref{eq:efficiency_metric}, \eqref{eq:stat_metric}, and~\eqref{eq:metric_no_csi} for the scenarios with instantaneous CSI, statistical CSI, and no CSI, respectively, assuming small-scale fading channels follow a Nakagami fading distribution with parameter $\kappa_m=2$. 
As expected, the achieved DEP degrades as the quality of channel information available to Alice diminishes. 
However, it is worth noting that the proposed technique with only statistical CSI achieves performance comparable to the MaxDEP-Greedy scheme, which relies on full instantaneous CSI.
Furthermore, in the absence of any channel knowledge (no CSI), the proposed strategy significantly outperforms the Random-Selection baseline, demonstrating the efficiency of the proposed metric.

\begin{figure}[t]
\begin{center} 
\includegraphics[width=3.3in]{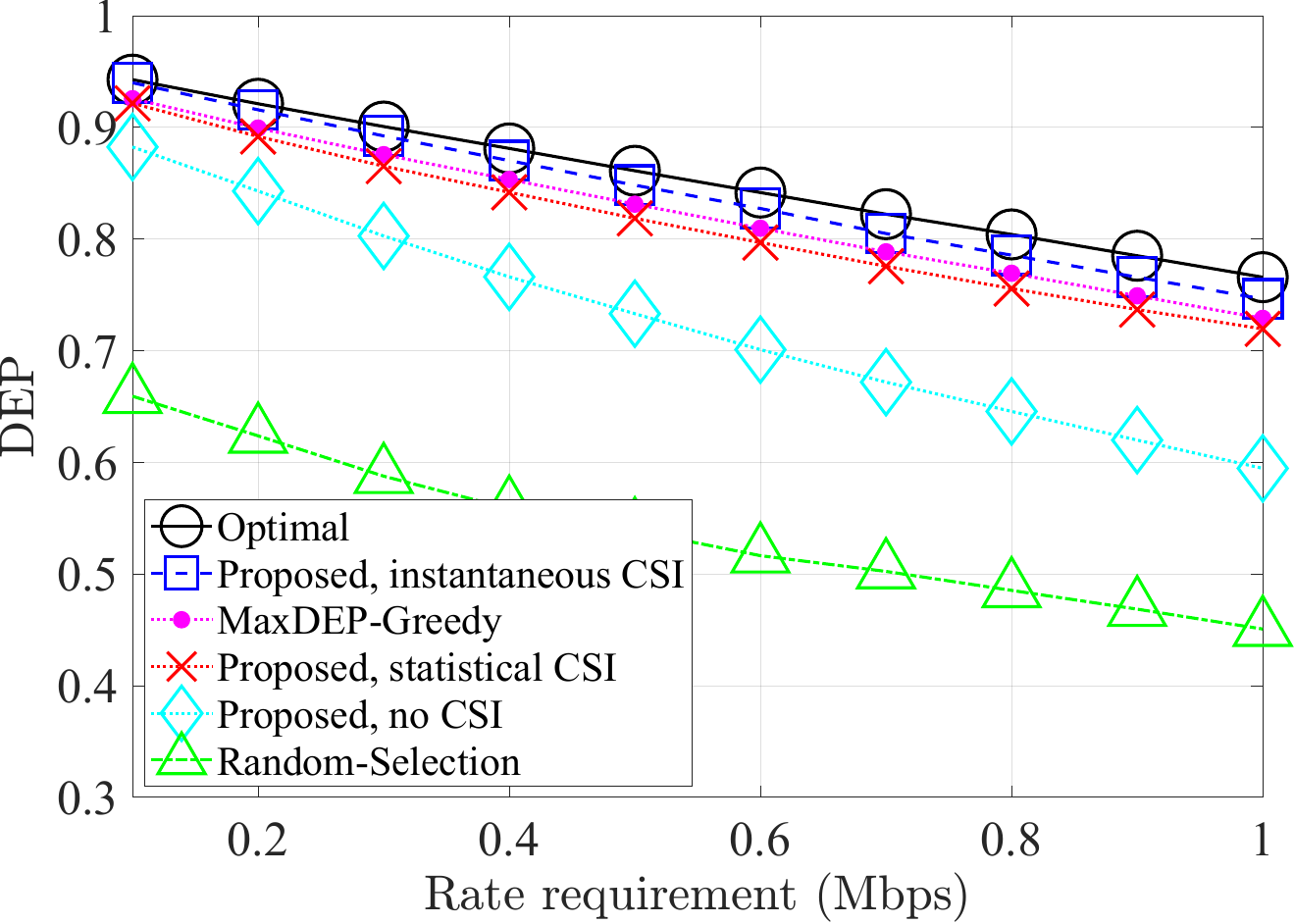}
\end{center} 
\vspace{-5.0mm}
\caption{Impact of CSI availability at Alice on the optimized DEP.}
\label{figure:optimization_2}
\end{figure}

\begin{figure}[t]
\begin{center} 
\includegraphics[width=3.3in]{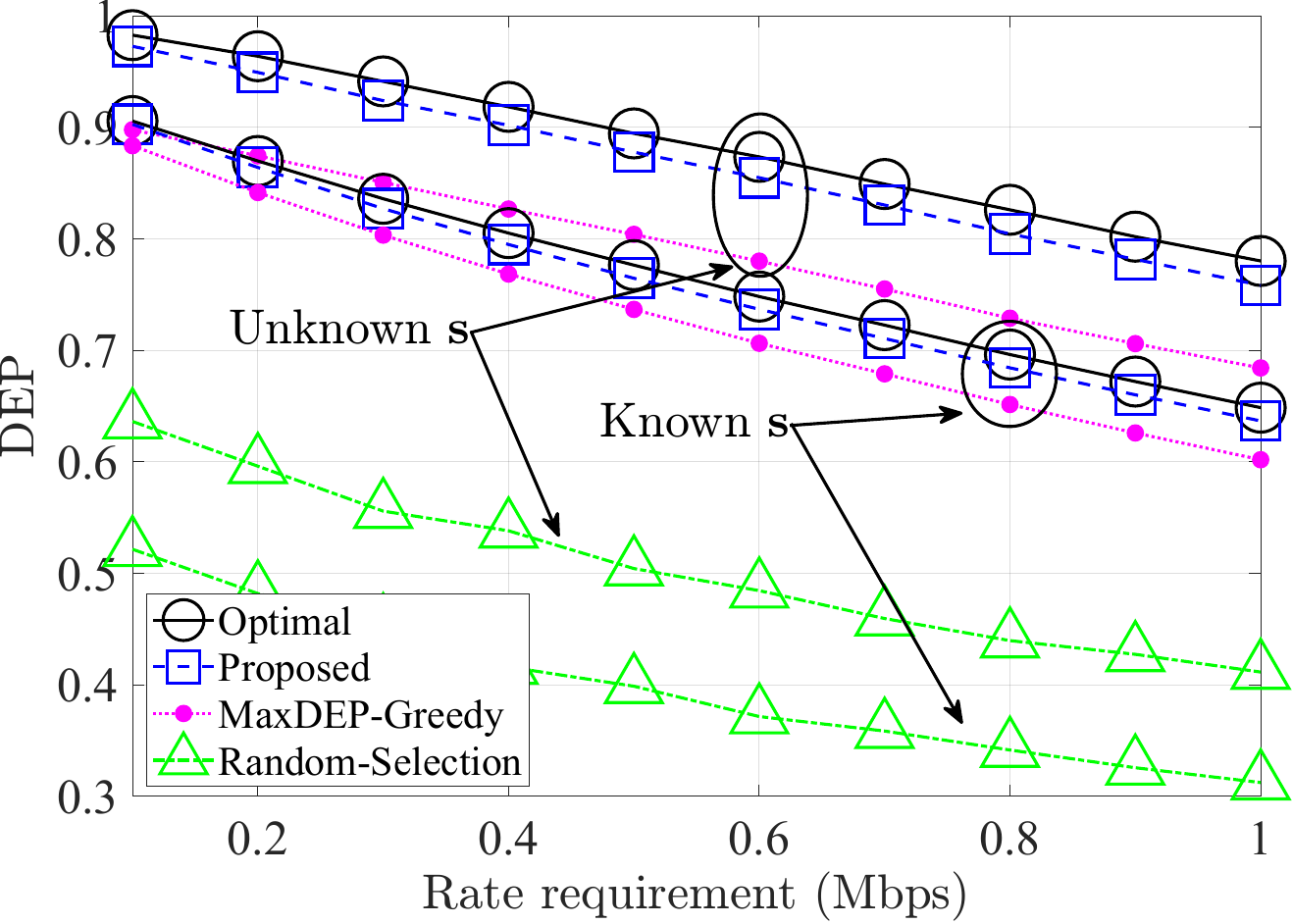}
\end{center} 
\vspace{-5.0mm}
\caption{Optimized covertness: Informed vs. uninformed Willie.}
\label{figure:optimization_3}
\end{figure}

In Fig.~\ref{figure:optimization_3}, we illustrate the optimized DEP performance, comparing the scenarios where Willie possesses knowledge of the active modality set $\sv$ versus the case where he does not, when $d_{\text{W}} = 30$~m.
It is observed that the DEP is significantly improved when $\sv$ is unknown to Willie, confirming that uncertainty regarding the transmission strategy enhances covertness. 
Furthermore, the proposed algorithm achieves near-optimal performance in both scenarios (known and unknown $\sv$), closely tracking the exhaustive search.
In addition, we observe that the performance gap between the proposed approach and the MaxDEP-Greedy scheme is more pronounced when $\sv$ is unknown to Willie. 
This is because the MaxDEP-Greedy method, by ignoring rate efficiency, tends to activate a larger number of modalities to meet the requirement, resulting in excessive aggregate energy at Willie. 
Consequently, this excess energy is particularly detrimental to the achievable DEP when Willie monitors the entire spectrum.

\section{Conclusion} \label{sec:conclusion}

In this paper, we investigated simultaneous multi-modal covert communications in heterogeneous networks, where a subset of available modalities is chosen for data transmission.
First, for the scenario where the selected modalities are known to Willie, we identified the optimal detector and derived both the exact DEP expression and a tight approximation.
Subsequently, for the case where Willie is uncertain regarding the active modalities, we derived the optimal detector and analyzed the DEP in the low-SNR regime.
Building on this analysis, we developed a novel low-complexity modality set selection algorithm that maximizes the DEP while satisfying a rate requirement at Bob.
Finally, the accuracy of our analysis and the efficiency of the proposed technique were validated through extensive numerical simulations.
Possible future research directions include extending this framework to multi-hop and multi-flow networks, as well as the joint optimization of transmit power and modality selection. 
Another interesting future research direction could involve multiple adversaries with different levels of coordination.


\appendices

\section{Proof of Theorem~\ref{Theorem:optimal_detector_known_set} } \label{appendix:optimal_detector_known_set}

The log-likelihood ratio (LLR) is given by
\begin{align} \label{eq:LLR}
	&\ln\left( \Lambda(Y_{\text{W}}) \right) = \ln\left( \frac{f(Y_{\text{W}} |\cH_1)}{f(Y_{\text{W}} |\cH_0)} \right)  \\
                   &= \ln\left(f(Y_{\text{W}} |\cH_1)\right) - \ln\left( f(Y_{\text{W}} |\cH_0) \right) \nonumber \\
                   &= \sum_{m \in \sv} \sum_{l=1}^L \Bigg[  \ln\left(  \frac{1}{ \pi  \sigma_{1,m}^2 } \right) - \frac{ \left| y_{\text{W},m}[l] \right|^2 }{   \sigma_{1,m}^2  }  \nonumber \\
                   &\qquad \qquad \quad  ~- \ln\left(  \frac{1}{ \pi \sigma_{0,m}^2 } \right)  + \frac{ \left| y_{\text{W},m}[l] \right|^2 }{  \sigma_{0,m}^2  }      \Bigg] \nonumber \\
                   &= \sum_{m \in \sv} \sum_{l=1}^L \left[  \left( \frac{1}{\sigma_{0,m}^2} - \frac{1}{ \sigma_{1,m}^2} \right)     \left| y_{\text{W},m}[l] \right|^2    \right] \nonumber \\
                   &\quad -  \sum_{m \in \sv} \sum_{l=1}^L \ln\left(  \frac{ \sigma_{1,m}^2}{\sigma_{0,m}^2}     \right) \nonumber \\
                   &= \sum_{m \in \sv} \left( \frac{1}{\sigma_{0,m}^2} \!-\! \frac{1}{ \sigma_{1,m}^2} \right)  \sum_{l=1}^L  \left| y_{\text{W},m}[l] \right|^2 - L \sum_{m \in \sv}  \ln\left(  \frac{ \sigma_{1,m}^2}{\sigma_{0,m}^2}     \right). \nonumber 
\end{align}
Here, we define the weight $w_m$ and the detection threshold $\delta_{\sv}$ as
\begin{align} \label{eq:weight_derivation}
    w_m &\triangleq  \frac{1}{\sigma_{0,m}^2} - \frac{1}{ \sigma_{1,m}^2} =\frac{\rho_{\text{W},m}}{ \Omega_m N_0 \left( 1+\rho_{\text{W},m} \right)   }, \\ \label{eq:threshold_derivation}
    \delta_{\sv} &\triangleq L \sum_{m \in \sv} \ln\left(  \frac{ \sigma_{1,m}^2}{\sigma_{0,m}^2}     \right) = L \sum_{m \in \sv} \ln\left(  1 + \rho_{\text{W},m}     \right),
\end{align}
respectively, where $\rho_{\text{W},m}$ denotes the SNR at Willie for modality $m$ as in~\eqref{eq:SNR_W_m}. 
Finally, based on the LRT in~\eqref{eq:LRT} and LLR in~\eqref{eq:LLR}, the optimal decision rule at Willie becomes
\begin{align} \label{eq:decision_criteria_appx}
	\sum_{m \in \sv} w_m E_m \underset{\cD_0}{\overset{\cD_1}{\gtrless}} \delta_{\sv},
\end{align} 
where $E_m$ is defined in~\eqref{eq:def_Energy}.

\section{Proof of Theorem~\ref{Theorem:exact_DEP} } \label{appendix:exact_DEP}

Note that $y_{\text{W},m}[l] \sim \cC\cN (0, \sigma_{k,m}^2 )$ for $\cH_k$ as in~\eqref{eq:Received_signal_Willie_distribution}.
Thus, the received signal energy $E_m =   \sum_{l=1}^L  \left| y_{\text{W},m}[l] \right|^2$ follows a chi-square distribution with degrees of freedom $2L$, i.e., $E_m \sim \frac{{\sigma_{k,m}^2}}{2} \chi^2(2L)$.
For a scalar $c>0$ and a random variable $X \sim \chi^2(n)$, $cX$ follows the gamma distribution with shape $\frac{n}{2}$ and scale $2c$, i.e., $cX \sim \cG(\frac{n}{2},2c)$.
Hence, for $m \in \sv$, we have 
\begin{align} \label{eq:Energy_each_mod_distribution}
	&\cH_k:  	w_m E_m \sim   \cG\left(L, w_m \sigma_{k,m}^2 \right),
\end{align}
for $k \in \{0, 1\}$.

The CF of the test statistic $T_{\sv} = \sum_{m \in \sv} w_m E_m$ is derived as
\begin{align} \label{eq:CF_of_T_appendix}
	\Phi_{T_{\sv}|\cH_k}(t) &\triangleq \EE\left[ e^{it\sum_{m \in \sv} w_m E_m}\right]  = \prod_{m \in \sv} \EE\left[ e^{it w_m E_m}\right],
\end{align}
where the CF of the gamma-distributed $w_m E_m$ is
\begin{align}   \label{eq:phi_def}
    \phi_{k,m}(t) &\triangleq \EE\left[ e^{it w_m E_m}\right] = \left( \frac{1}{1 - i t w_m \sigma_{k,m}^2} \right)^{L}.
\end{align}

From the Gil-Pelaez inversion formula~\cite{GIL-PELAEZ:51}, the cumulative distribution function (CDF) of $X$ with the CF $\Phi_X(t)$ can be calculated as $\PP(X\leq x) = \frac{1}{2} - \frac{1}{\pi} \int_0^\infty \frac{1}{t} \Im \left[ e^{-itx}  \Phi_X(t) \right] \d t$ where $\Im[x]$ denotes the imaginary part of $x$. 
Then, we obtain the CDF of $T_{\sv}$ as 
\begin{align} \label{eq:CDF_of_T_from_CF}
    F_{T_{\sv} | \cH_k}(x) &= \PP( T_{\sv} \leq x | \cH_k ) \nonumber \\
    &= \frac{1}{2} - \frac{1}{\pi} \int_0^\infty \frac{1}{t} \Im \left[ e^{-itx}  \Phi_{T_{\sv} | \cH_k}(t) \right] \d t.
\end{align}
By substituting~\eqref{eq:CDF_of_T_from_CF} into~\eqref{eq:DEP_def}, we identify the exact DEP expression as in~\eqref{eq:DEP_from_CF}.

\section{Proof of Lemma~\ref{Lemma:approx_DEP} } \label{appendix:approx_DEP}

Recall that the weighted energy $w_m E_m$ follows a gamma distribution $w_m E_m \sim  \cG(L,w_m \sigma_{k,m}^2 )$.
Thus, the mean and variance of the test statistic $T_{\sv} = \sum_{m \in \sv} w_m E_m$ are $\mu_{\sv,k} = L \sum_{m \in \sv} w_m \sigma_{k,m}^2$ and $\sigma_{\sv,k}^2 = L \sum_{m \in \sv} ( w_m \sigma_{k,m}^2 )^2$ for $k \in \{0, 1\}$, respectively.

Then, by applying the moment matching-based approximation~\cite{Akhiezer:65}, the approximated $T_{\sv}$ becomes
\begin{align}
    T_{\sv,\text{approx}} \sim \cG\left(  \alpha_{\sv,k}   ,  \theta_{\sv,k} \right),
\end{align}
where the shape parameter $\alpha_{\sv,k}$ and scale parameter $\theta_{\sv,k}$ are written as
\begin{align} \label{eq:shape_approx}
    \alpha_{\sv,k} &\triangleq \frac{\mu_{\sv ,k}^2}{\sigma_{\sv,k}^2} =  \frac{L \left(\sum_{m \in \sv} w_m \sigma_{k,m}^2\right)^2}{\sum_{m \in \sv} \left( w_m \sigma_{k,m}^2 \right)^2},  \\ \label{eq:scale_approx}
    \theta_{\sv,k} &\triangleq \frac{\sigma_{\sv,k}^2}{\mu_{\sv ,k}} = \frac{\sum_{m \in \sv} \left( w_m \sigma_{k,m}^2 \right)^2}{\sum_{m \in \sv} w_m \sigma_{k,m}^2}.
\end{align}

Lastly, we have an approximation of the DEP as
\begin{align} \label{eq:DEP_moment_matching_appendix}
    \tP_{\text{DEP},\sv} &\approx 1 + \PP\Big(  T_{\sv,\text{approx}} < \delta_{\sv} \Big| \cH_1 \Big) \!-\! \PP\Big(  T_{\sv,\text{approx}} < \delta_{\sv} \Big| \cH_0 \Big) \nonumber \\
    &= 1 + \frac{\gamma\left( \alpha_{\sv,1}, \frac{\delta_{\sv}}{\theta_{\sv,1}} \right)}{\Gamma\left(\alpha_{\sv,1}\right)} - \frac{\gamma\left( \alpha_{\sv,0}, \frac{\delta_{\sv}}{\theta_{\sv,0}} \right)}{\Gamma\left(\alpha_{\sv,0}\right)}.
\end{align}
This concludes the proof.

\section{Proof of Theorem~\ref{Theorem:optimal_detector_unknown} } \label{appendix:optimal_detector_unknown}

In this case, Willie has to examine the signals across all modalities $m \in \cM$, and thus the likelihood for the null hypothesis $\mathcal{H}_0$ is written as
\begin{align} \label{eq:Likelihood_Null_unknown}
    f(Y_{\text{W}} |\cH_0) =  \prod_{ m \in \cM} \prod_{l=1}^L \frac{1}{ \pi \sigma_{0,m}^2 } \exp\left( - \frac{ \left| y_{\text{W},m}[l] \right|^2 }{  \sigma_{0,m}^2 } \right).
\end{align}
In addition, as Willie does not know the specific subset of modalities $s \subseteq \mathcal{M}$ that Alice is utilizing, the likelihood for the alternate hypothesis $\mathcal{H}_1$ is given by
\begin{align} \label{eq:Likelihood_Alternate_unknown}
    &f(Y_{\text{W}} | \mathcal{H}_1) = \!\!\!\!\!\!\!\! \sum_{\vv \subseteq \mathcal{M}, \vv \neq \emptyset} \!\! \! \eta_{\vv} \Bigg[ \Bigg( \! \prod_{m \in \vv} \prod_{l=1}^L  \frac{1}{ \pi  \sigma_{1,m}^2 } \exp\bigg( \! - \frac{ \left| y_{\text{W},m}[l] \right|^2 }{   \sigma_{1,m}^2 } \bigg) \!\! \Bigg)  \nonumber \\ 
    &\times \Bigg( \prod_{m \notin \vv} \prod_{l=1}^L  \frac{1}{ \pi \sigma_{0,m}^2 } \exp\left( - \frac{ \left| y_{\text{W},m}[l] \right|^2 }{  \sigma_{0,m}^2 } \right) \Bigg) \Bigg],
\end{align}
where $\eta_{\vv}$ is the probability that a set $\vv \subseteq \cM$ is chosen by Alice. 
In the absence of prior knowledge regarding $\{ \eta_{\vv} \}$, Willie adopts a uniform distribution, setting $\eta_{\vv} = \eta \triangleq 1 / (2^M-1)$ for all valid subsets.

Consequently, the likelihood ratio is derived as
\begin{align} \label{eq:LLR_unknown}
    &\Lambda(Y_{\text{W}}) = \eta \sum_{ \vv \subseteq \mathcal{M}, \vv \neq \emptyset}  \prod_{m \in \vv} \prod_{l=1}^L  \frac{ \frac{1}{ \pi  \sigma_{1,m}^2 } \exp\left( - \frac{ \left| y_{\text{W},m}[l] \right|^2 }{   \sigma_{1,m}^2 } \right) }{ \frac{1}{ \pi \sigma_{0,m}^2 } \exp\left( - \frac{ \left| y_{\text{W},m}[l] \right|^2 }{  \sigma_{0,m}^2 } \right) } \nonumber \\
    &= \eta \sum_{ \vv \subseteq \mathcal{M}, \vv \neq \emptyset}  \prod_{m \in \vv}  \prod_{l=1}^L  \exp\left( w_m  \left| y_{\text{W},m}[l] \right|^2  - \ln\left(  1 + \rho_{\text{W},m} \right)  \right) \nonumber \\ 
    &= \eta \sum_{ \vv \subseteq \mathcal{M}, \vv \neq \emptyset} \prod_{m \in \vv} \exp\left( w_mE_m - L \ln \left(1 + \rho_{\text{W},m} \right) \right) \nonumber \\
    &= \eta \sum_{ \vv \subseteq \mathcal{M}, \vv \neq \emptyset} \!  \exp \! \bigg( \sum_{m \in \vv} \left( w_mE_m \!-\! L \ln \left(1 + \rho_{\text{W},m} \right) \right) \bigg),
\end{align}
where $w_m$, $E_m$, and $\rho_{\text{W},m}$ are defined in~\eqref{eq:def_weight}, \eqref{eq:def_Energy}, and~\eqref{eq:SNR_W_m}, respectively. 
Finally, from the LRT in~\eqref{eq:LRT}, we can obtain the decision rule for the optimal detector at Willie as in~\eqref{eq:LRT_unknown_final}.

\section{Proof of Theorem~\ref{Theorem:lowSNR_DEP} } \label{appendix:lowSNR_DEP}

When $|\rho_{\text{W},m}| \ll 1$, both $w_m = \frac{\rho_{\text{W},m}}{ \Omega_m N_0 (1 + \rho_{\text{W},m})}$ and $\delta_{\vv}$ are near 0, and hence $T_{\vv} - \delta_{\vv}$ in~\eqref{eq:LRT_unknown_final} becomes close to 0. 
Applying the Taylor series approximation $\exp(x) \approx 1 + x$ for $|x| \ll 1$, the test statistic $T$ in~\eqref{eq:LRT_unknown_final} can be approximated as
\begin{align} \label{eq:T_approx}
    T &\approx \eta \sum_{ \vv \subseteq \mathcal{M}, \vv \neq \emptyset} \left( 1 + T_{\vv} - \delta_{\vv} \right) \nonumber \\
    &= 1 +  \eta \sum_{ \vv \subseteq \mathcal{M}, \vv \neq \emptyset} \sum_{m \in \vv} \left( w_mE_m  -  L \ln\left(1 + \rho_{\text{W},m} \right) \right) \nonumber \\
    &= 1 +  2^{M-1}  \eta \sum_{m \in \cM} \left(  w_mE_m - L \ln \left(1 + \rho_{\text{W},m} \right) \right),  
\end{align}
where the last equality follows from the fact that the summations cover every possible non-empty subset $\vv \subseteq \mathcal{M}$ and each modality $m$ is included in $2^{M-1}$ subsets.

Consequently, the decision rule in~\eqref{eq:LRT_unknown_final} becomes
\begin{align} \label{eq:LRT_unknown_approx}
    \underbrace{\sum_{m \in \cM} w_mE_m}_{ \triangleq \tilde{T} }  \underset{\mathcal{D}_0}{\overset{\mathcal{D}_1}{\gtrless}} \underbrace{ L \sum_{m \in \cM} \ln \left(1 + \rho_{\text{W},m} \right) }_{  \triangleq  \tilde{\delta} }.
\end{align}
Note that the detector in~\eqref{eq:LRT_unknown_approx} examines all modalities since Willie does not have the knowledge about the selected subset of modalities $\sv$ unlike the case in~\eqref{eq:decision_criteria}.

The CFs of $\tilde{T}$ under the two hypotheses are given by~\eqref{eq:CF_of_T_H0} and~\eqref{eq:CF_of_T_H1}, respectively. 
Finally, from the approximated test statistic $\tilde{T}$ in~\eqref{eq:LRT_unknown_approx} and the Gil-Pelaez inversion formula~\cite{GIL-PELAEZ:51}, the DEP in the low-SNR regime is written as
\begin{align} \label{eq:DEP_unknown_appendix}
    &\tP_{\text{DEP},\sv} \approx  1 + \PP\left( \tilde{T} < \tilde{\delta} \Big| \cH_1 \right) - \PP\left(  \tilde{T} < \tilde{\delta} \Big| \cH_0 \right) \nonumber \\
    &= 1 - \frac{1}{\pi} \int_0^\infty \frac{1}{t} \Im \left[ e^{-it \tilde{\delta}} \left(  \Phi_{\tilde{T} | \cH_1 }(t) - \Phi_{\tilde{T} | \cH_0 }(t) \right) \right] \d t.
\end{align}
This concludes the proof.

\section{Proof of Lemma~\ref{Lemma:approx_DEP_unknown} } \label{appendix:approx_DEP_unknown}

Under the null hypothesis $\cH_0$, the mean and variance of $\tilde{T} = \sum_{m \in \cM} w_mE_m$ are respectively given by
\begin{align} \label{eq:mean_null}
    \tilde{\mu}_0 &= L \sum_{m \in \cM} w_m \sigma_{0,m}^2, \\ \label{eq:variance_null}
    \tilde{\sigma}_0^2 &= L \sum_{m \in \cM}  w_m^2 \sigma_{0,m}^4.
\end{align}
Also, under the alternate hypothesis $\cH_1$, the mean of $\tilde{T}$ is expressed as
\begin{align} \label{eq:mean_alternate}
    \tilde{\mu}_1 &= L \left(   \sum_{m \in \sv} w_m \sigma_{1,m}^2 + \sum_{m \notin \sv} w_m \sigma_{0,m}^2     \right).
\end{align}
As the noise is independent across modalities, the variance of $\tilde{T}$ under $\cH_1$ is given by
\begin{align} \label{eq:variance_alternate}
    \tilde{\sigma}_1^2 &= \sum_{m \in \sv} \text{Var}\left(  w_mE_m \big| \cH_1 \right) + \sum_{m \notin \sv} \text{Var}\left(  w_mE_m \big| \cH_0 \right) \nonumber \\
    &= L \left( \sum_{m \in \sv}  w_m^2 \sigma_{1,m}^4 + \sum_{m \notin \sv}  w_m^2 \sigma_{0,m}^4 \right).
\end{align}

Then, by adopting the two-moment matching method~\cite{Akhiezer:65}, the approximation of
$\tilde{T}$ becomes
\begin{align}
    \tilde{T}_{\text{approx}} \sim \cG\left(  \tilde{\alpha}_{k}   ,  \tilde{\theta}_{k} \right),
\end{align}
where $\tilde{\alpha}_{k} \triangleq \frac{\tilde{\mu}_{k}^2}{\tilde{\sigma}_{k}^2}$ and $\tilde{\theta}_{k} \triangleq \frac{\tilde{\sigma}_{k}^2}{\tilde{\mu}_{k}}$. 
Finally, we have an approximation of the DEP as
\begin{align} \label{eq:DEP_moment_matching_appendix}
    \tP_{\text{DEP},\sv} &\approx 1 + \PP\left( \tilde{T}_{\text{approx}} < \tilde{\delta} \Big| \cH_1 \right) - \PP\left(  \tilde{T}_{\text{approx}} < \tilde{\delta} \Big| \cH_0 \right) \nonumber \\
    &= 1 + \frac{\gamma\left( \tilde{\alpha}_{1}, \frac{\tilde{\delta}}{ \tilde{\theta}_{1}} \right)}{\Gamma\left(\tilde{\alpha}_{1}\right)} - \frac{\gamma\left( \tilde{\alpha}_{0}, \frac{ \tilde{\delta}}{\tilde{\theta}_{0}} \right)}{\Gamma\left(\tilde{\alpha}_{0}\right)}.
\end{align}
This concludes the proof.

\bibliographystyle{ieeetr}
\bibliography{bibliography.bib}

\end{document}